\newtheorem{thm}{Theorem}
\newtheorem{lem}{Lemma}
\newtheorem{prop}{Proposition}
\newtheorem{cor}{Corollary}
\newtheorem{proof}{Proof}
\begin{document}

	\title{{Stochastic Geometry Modeling of Cellular V2X Communication Over Shared Channels}}

	
	\author{\IEEEauthorblockN{Muhammad Nadeem Sial\IEEEauthorrefmark{1},
			Yansha Deng\IEEEauthorrefmark{1},~\IEEEmembership{Member,~IEEE},
			Junaid Ahmed\IEEEauthorrefmark{2},\\
			Arumugam Nallanathan\IEEEauthorrefmark{3},~\IEEEmembership{Fellow,~IEEE} and Mischa Dohler \IEEEauthorrefmark{1},~\IEEEmembership{Fellow,~IEEE}}\\
		\IEEEauthorblockA{\IEEEauthorrefmark{1}Department of Informatics, King's College London, London, UK}\\
		\IEEEauthorblockA{\IEEEauthorrefmark{2} COMSATS University, Islamabad, Pakistan}\\
		\IEEEauthorblockA{\IEEEauthorrefmark{3}Queen Mary University of London , London, UK}
		
		\thanks{ 
			Corresponding author: Yansha Deng (email: yansha.deng@kcl.ac.uk.)}}
	

	%
	
	{}

	\maketitle

	\begin{abstract}
		To overcome the limitations of Dedicated Short Range Communications (DSRC) with short range, non-supportability of high density networks, unreliable broadcast services, signal congestion and connectivity disruptions, cellular vehicle-to-everything (C-V2X) communication networks, standardized in 3rd Generation Partnership Project (3GPP) Release 14, have been recently introduced to cover broader vehicular communication scenarios including vehicle-to-vehicle (V2V), vehicle-to-pedestrian (V2P) and vehicle-to-infrastructure/network (V2I/N). In C-V2X, vehicles can directly communicate over {PC5 based dedicated sidelinks called direct mode or V2V communication.} However, high vehicle densities may require reuse of cellular spectrum for V2V. Moreover, infrastructure mode communication through V2I/N links can augment V2V communication by enhancing communication range and reliability for enhanced safety along with consistent performance under traffic congestions. Motivated by the stringent connection reliability, {spectral efficiency, and coverage requirements in C-V2X}, this paper presents the first comprehensive and tractable analytical framework for performance of C-V2X networks over shared V2V and cellular uplink channels, {where the transmitting vehicles can deliver their information via infrastructure or direct mode, based on their distances, propagation environments and the bias factor. By practically modeling the vehicles on the roads using the doubly stochastic Cox process and the base-stations, we derive new association probabilities, new success probabilities of infrastructure and direct mode, and overall success probability of the C-V2X communication over shared channels, which are validated by the simulations results. Our results reveal the benefits of our proposed model (possibility of selecting both direct and infrastructure modes over shared channels) compared to V2V network  in terms of success probability.}
		
		
	\end{abstract}

	
	
	\begin{IEEEkeywords}
		C-V2X, V2X communication,  5G, stochastic geometry, V2V, V2I, V2I/N, V2P, uplink cellular networks.
	\end{IEEEkeywords}

	\IEEEpeerreviewmaketitle

	\section{Introduction}
	
	\label{Intro}
	\IEEEPARstart{T}HE automobile industry is evolving toward connected and autonomous vehicles that offer many benefits, such as improved road safety, less traffic congestion, reduce environmental impacts, lower capital expenditure and additional traveler information services \cite{V2XTo5G}. A key enabler of this evolution is vehicle-to-everything (V2X) communication, which allows a vehicle to communicate with other vehicles (V2V), nearby infrastructure (V2I), cellular-based networks (V2N) and even pedestrians (V2P) \cite{V2XTo5G}. The V2X communications based on cellular infrastructure, referred as Cellular V2X (C-V2X) have been defined by the Third Generation Partnership Project (3GPP) group \cite{3GPPRelease14}. This innovation promises to eliminate 80\% of the current road accidents and help in fostering auto-mobile and telecommunication industries for a smarter and safer ground transportation system \cite{harding2014vehicle}. 
	
	
	Existing V2V communication can be supported via the DSRC standard, however, it has certain limitations such as short range (about 300 meters), unable to support high density of networks and has unreliable broadcast services \cite{abboud2016interworking}. The system relies on road side units (RSUs), which are not currently deployed at all locations especially in rural areas due to longer distances. The underlying carrier sensing multiple access (CSMA) medium access control (MAC) protocol also exhibits signal congestion and connectivity disruptions due to rapid changing network topology and ad-hoc vehicular networks \cite{ren2015power}. More importantly, single DSRC technology cannot support a variety of incoming vehicular oriented applications.
	
	
	{To augment DSRC communication, C-V2X communication based on 4G LTE was proposed to meet vehicular communication capacity, latency and coverage requirements} \cite{3GPPRelease14}, \cite{abboud2016interworking}, \cite{zheng2015heterogeneous} and \cite{sun2014d2d}. {In future releases, 3GPP is working on specifying 5G-based V2X} \cite{V2XTo5G}. {This C-V2X has several key advantages over DSRC, including longer range and enhanced reliability, resulting in enhanced safety, more consistent performance under traffic congestions, evolution path towards 5G for emerging applications and better coexistence with other technologies. Moreover, LTE and 5G can be used for RSU functions thus eliminating the need for highway authorities to install and maintain RSUs which also helps to reduce costs and accelerate the realization of C-V2X.}

	
	The C-V2X technologies seek to address a variety of safety use cases such as forward collision warning, emergency electronic brake light (EEBL), control loss warning, blind spot and lane change warning, as well as vulnerable road user (VRU) safety applications \cite{V2XCellSol}. It can also support queue warning, hazardous road condition warning, automated parking and tolling systems, traffic advisories and dynamic ride sharing, infotainment, local information, route planning, map dissemination and fleet management. {The transmission of C-V2X messages for these applications can be done via a direct mode or infrastructure mode (V2I/N)} \cite{V2XCellSol}. {In the direct mode, two or more vehicles can communicate directly with each other. In the infrastructure mode, vehicles communicate through cellular network or RSUs referred as V2I/N links. In this mode, the messages sent to the network or RSU in uplink (UL) may be intended for a C-V2X application server or may be intended to other vehicles nearby, in which case the network or RSU can forward the packet to the vehicles in downlink over a larger area. Therefore, a given vehicle may use direct and infrastructure modes interchangeably for same application as shown in Fig.} \ref{Fig1ProposedModel} {(links sharing the spectrum are shown with additional box)}. {It is also envisaged that due to large number of vehicles on roads, there can be scarcity of cellular spectrum allocated for PC5 based dedicated sidelink.} Therefore, vehicular communication can occur on shared cellular channels (uplink or downlink frequencies) to improve spectral efficiency. To model and analyze these C-V2X communication over shared channels, stochastic geometry has been proposed, considering that it has been utilized as a powerful tool to model and analyze mutual interference between transceivers in the wireless networks, such as conventional cellular networks \cite{novlan2013analytical,andrews2011tractable}, wireless sensor networks \cite{deng2016physical}, cognitive radio networks \cite{deng2016artificial,elsawy2013stochastic}, and heterogeneous cellular networks \cite{deng2016modeling,HElSawy2014Stochastic,sial2017analysis,sial2017novel,sial2019realistic}.



	\begin{figure}[!t]
		\centering
		\includegraphics[scale=0.28]{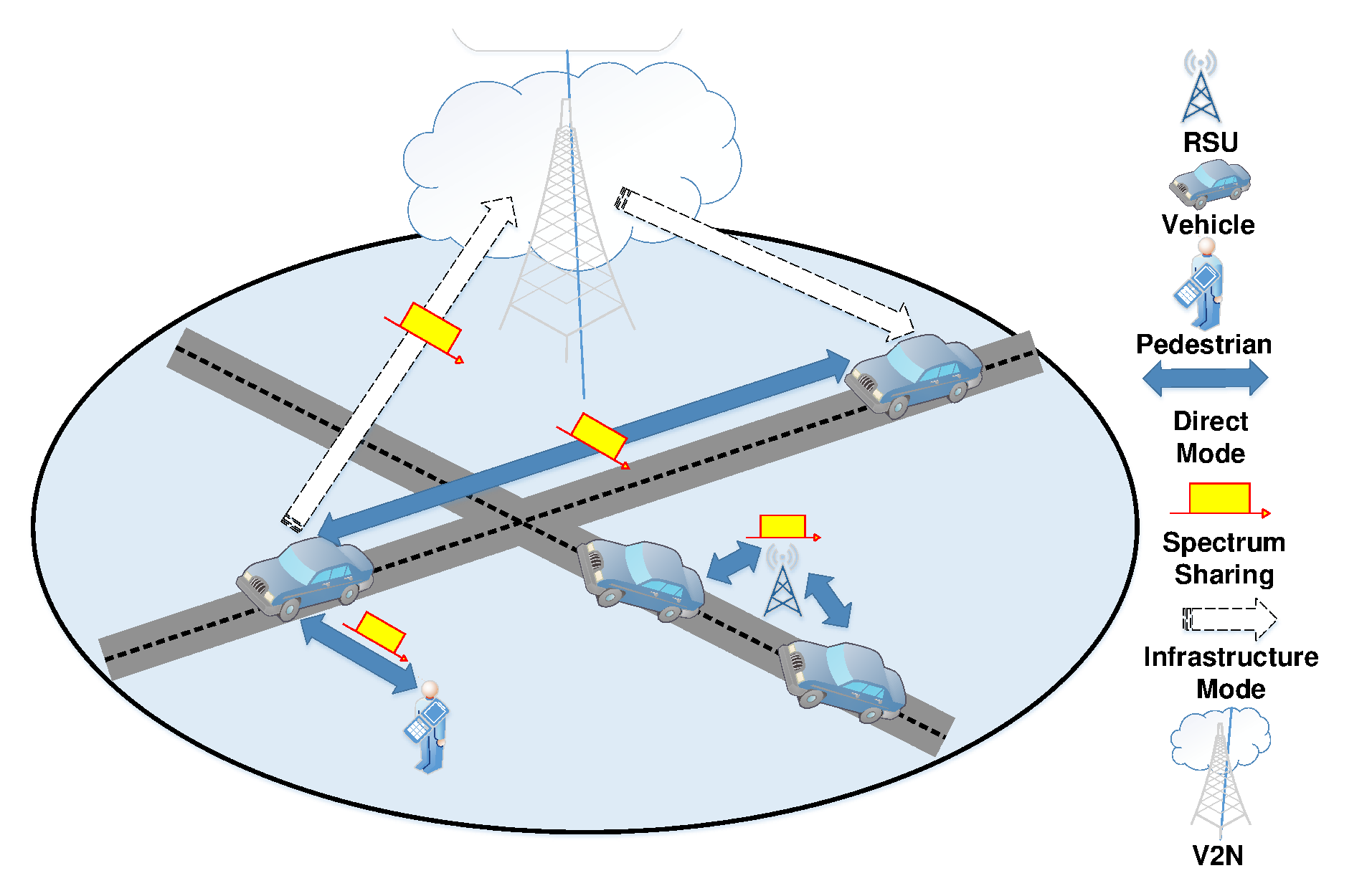}
		\caption{{Cellular V2X communication having direct and infrastucture modes.}}
		\label{Fig1ProposedModel}
	\end{figure}

	
	The initial studies on vehicular communication have focused on 
	modeling the V2V communication (i.e. without involvement of the cellular nodes) using stochastic geometry \cite{blaszczyszyn2009maximizing, busanelli2011performance,farooq2016stochastic, ni2015packet, steinmetz2015stochastic,jeyarajreliability}, where simple spatial models 
	with a single road, a multi-lane road, or  orthogonal roads were considered. The works in \cite{voss2009distributional,gloaguen2010analysis,morlot2012population,gloaguen2005simulation, blaszczyszyn2015random,chetlur2017coverage} accounted for the randomness of roads distributions. In \cite{voss2009distributional}, the nodes in the WiFi mesh networks were modeled by a Cox process on a Poisson-Line tessellation (PLT), and the nodes on each line are modeled by a inhomogeneous 1D PPP, where the probability density function of the shorted Euclidean distance between two inter-nodes was derived. Later on in \cite{gloaguen2010analysis}, the Cox process on a PLT was generalized to  a Poisson-Line tessellation (PLT), Poisson-Voronoi tessellation (PVT), or a Poisson-Delaunay tessellation (PDT), and the nodes on each line are modeled by a homogeneous 1D PPP.  Their results have shown that PLT often gains preference over PVT and PDT in modeling road systems\footnote{ It has also been used in other related applications, such as in modeling the effect of blockages in localization networks \cite{aditya2017asymptotic}.} due to its analytical tractability. In \cite{morlot2012population}, the uplink coverage probability was derived for a network where the typical receiver is randomly chosen from a PPP, and the locations of transmitter mobile users alongside roads are modeled as a Cox process on a Poisson line process (PLP). 
	In \cite{chetlur2017coverage,chetlur2018success}, the coverage probability of the V2V communication was derived, where the transmitters and receivers were modeled using independent Cox processes on the same PLP (i.e. a doubly-stochastic spatial model), and it captures the irregularity in the spatial layout of roads via the PLP model, and the distribution of vehicles on each road via  the 1D PPP model.   
	
	%
	

	Note that \cite{voss2009distributional,gloaguen2010analysis,morlot2012population,gloaguen2005simulation, blaszczyszyn2015random,chetlur2017coverage} are limited to the V2V communication or V2I communication. The first performance characterization of the  C-V2X downlink communication was studied in the master thesis in \cite{guha2016cellular}, where  the association probabilities and the coverage probabilities for the  V2V and the base station to vehicle downlink communications were derived for  the maximum power based association scheme and the threshold distance based association scheme. Recently, \cite{choi2018analytical} has performed downlink coverage analysis of cellular network leveraging vehicles where authors have derived the distance of typical receiver vehicle at center from nearest base station or vehicle. Further, they derived downlink association and the coverage probabilities of the typical vehicle in terms of integral formulas to characterize the downlink performance of vehicular communication. {However, in these works, V2V communication is sharing frequencies with cellular downlink and accordingly, interference characterization has been performed. Moreover, messages transmission from a vehicle to cellular network on the uplink has not been considered for infrastructure mode of C-V2X.}     
	
	%
	%

	{The present paper can be seen as an extension to downlink communication work presented by} \cite{guha2016cellular,choi2018analytical,chetlur2018success}. {In this paper, we focus on the C-V2X communication driven by the stringent high reliability, coverage and spectral efficiency requirements for safety, traffic management and infotainment applications. In this paper, we model the C-V2X communication where messages can be exchanged directly between nearby vehicles in direct mode or vehicles can send messages to nearby vehicles through V2I/N links in infrastructure mode using V2X application server to achieve larger coverage area as envisaged in C-V2X. Therefore, in this paper, we propose sharing of V2V and cellular uplink frequency bands for C-V2X communication in direct and infrastructure modes to improve spectral efficiency. {In our proposed model, the direct mode communication can share the frequencies with cellular uplink instead of downlink as defined by 3GPP Release 14} \cite{3GPPRelease14}. Hence, interference characterization presented in this paper is different from downlink analysis proposed by} \cite{guha2016cellular,choi2018analytical,chetlur2018success}. {Due to the shared spectrum between the direct mode links and cellular uplinks, the receiving vehicle or base-station (BS) communication with the nearest transmitting test vehicle will be interfered by the communication of all other vehicles transmitting on a particular frequency resource. In this case, no interference from cellular base-stations will be observed by the receiving vehicle or BS as downlink is not being shared by direct mode or cellular uplink.} Furthermore, authors in \cite{guha2016cellular,choi2018analytical,chetlur2018success} {have not considered messages transmission from a vehicle to cellular network on uplink channels to route traffic from a given vehicle to nearby vehicles through cellular network which is required in infrastructure mode of C-V2X. To achieve this, we calculate association probabilities for selecting the direct and infrastructure modes which has not been previously presented in the existing literature.} Additionally, we suggest a mechanism during association process to control vehicular communication over cellular network to limit interference levels to cellular users. In our system model, the locations of vehicles are modeled as a Cox process on a Poisson Line Process (PLP), and the cellular BSs of V2N are deployed as 2D PPP. Our contributions can be summarized as follows:
	\begin{itemize}
		\item {We present a comprehensive and tractable analytical framework for analyzing the cellular V2X communication, where the vehicles decide to transmit to other vehicles in direct mode or cellular BS in infrastructure mode via shared V2V and cellular uplink carriers depending on their corresponding distances, propagation environments and association bias.} 
		\item Based on the proposed C-V2X mode selection scheme, we derive the shortest distances and the association probabilities of the vehicles using direct or infrastructure mode via the shared V2V and cellular uplink carriers, respectively. 
		\item {We derive the analytical expressions for the  success probabilities of the V2V communication, V2B and the B2V communication, and the overall success probability of receiving vehicle (i.e., C-V2X communication) using shared V2V and cellular uplink resources, which are validated by Monte Carlo simulation.} 		
		\item In C-V2X network with low and medium vehicle intensity on the roads, there is almost equal probability of transmission via the direct or infrastructure modes. Moreover, infrastructure mode success probability increases at faster rate with the increase of vehicle nodes, and the success probability of the C-V2X communication in direct mode improves with increasing the road intensity.
		\item Our results have shown that the success probability of the C-V2X communication {while using both direct and infrastructure modes is comparable with  that of the V2V communication alone and network mode facilitates vehicular communication over longer ranges without the requirement of RSUs.} 		
		
	\end{itemize}
	
	The rest of the paper is organized as follows. The mathematical preliminary, system model along with assumptions, and the methodology of analysis are described in Section II and III, respectively. Section IV presents the analysis of association probabilities of C-V2X communication. The success probability is analyzed in Section V. Section VI presents and discusses the numerical and simulation results. The paper is concluded in Section VII. A list of the key mathematical notations used in this paper is given in Table I. 
	
	

	\begin{figure}[!t]
		\centering
		\includegraphics[scale=0.195]{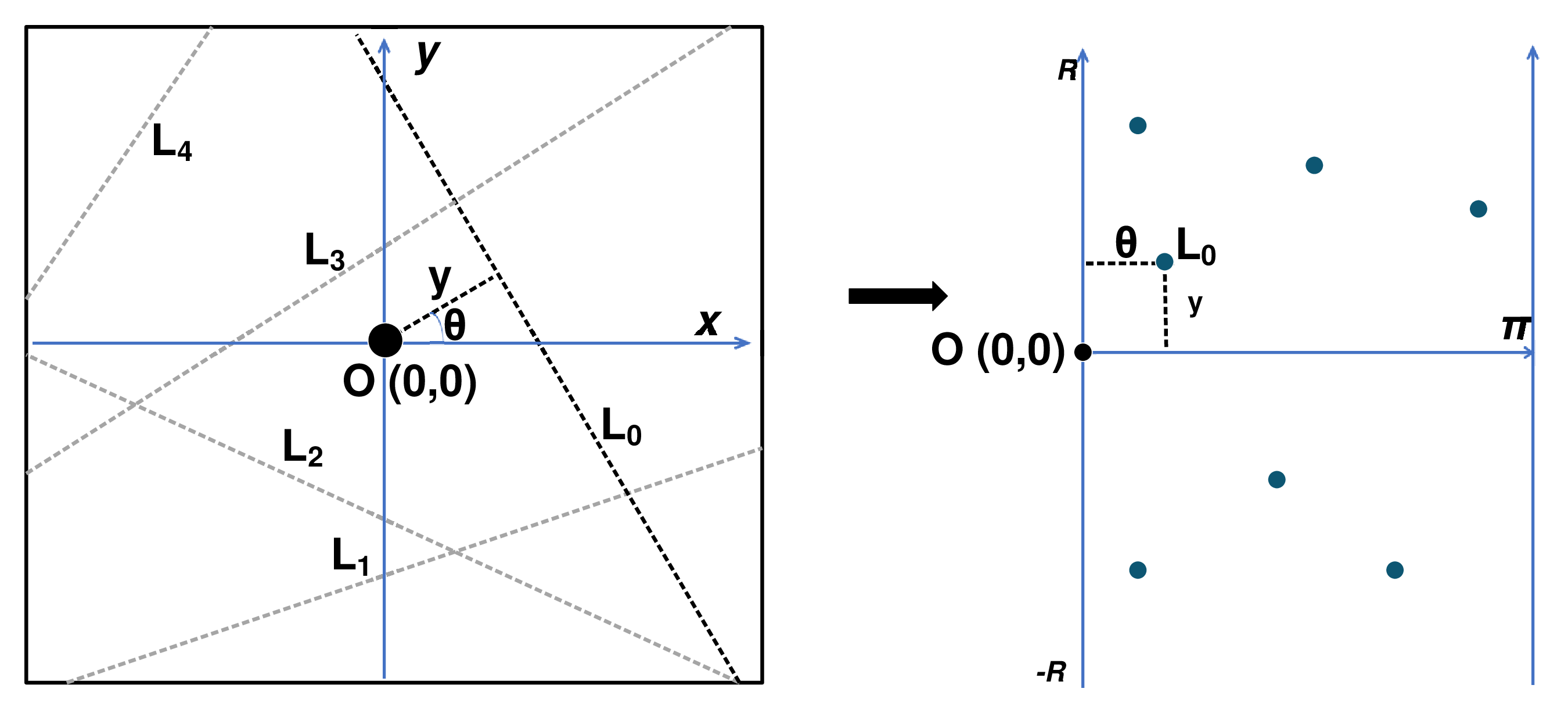}
		\caption{(a) Illustration of Poisson Line Process in two dimensional plane ${\mathbb{R}^2}$ (left). (b) Illustration of {poisson process} on representation space  { $\mathbb{C}=[0,2\pi) \times [0,\infty)$} (right).}
		\label{PLPIllustration}
	\end{figure}
	
	\section{ Preliminary: Poisson Line Process}
	\label{PLPDef}
	
	The V2X networks exhibit unique spatial characteristics due to the fact that vehicles are only driven on roadways, which are predominantly linear in nature {and layout of the roads is often irregular, which makes it possible to model the road system as a realization of a line process \cite{baccelli1997stochastic,gloaguen2010analysis,voss2009distributional,morlot2012population,chetlur2017coverage}. Therefore,} we model the roadways as a network of lines that are distributed on the plane according to a Poisson Line Process (PLP). In this section, we provide a brief introduction of PLP, the detailed information of the underlying theory can be found in \cite{chiu2013stochastic,guha2016cellular,chetlur2017coverage}.
	
	A Poisson line process is  a random collection of lines in a 2D plane. Any undirected line $\boldsymbol{L}$ in ${\mathbb{R}^2}$ can be uniquely characterized by its perpendicular distance $y$ from the origin $O(0, 0)$ and the angle $\theta$ subtended by the perpendicular dropped onto the line from the origin with respect to the positive x-axis in counter clockwise direction, as shown in Fig. \ref{PLPIllustration}. The pair of parameters $\theta$ and $y$ can be represented as the coordinates of a point on the cylindrical surface { $\mathbb{C}=[0,2\pi) \times [0,\infty)$ as illustrated in Fig. \ref{PLPIllustration}.} Clearly, there is a one-to-one correspondence between the lines in ${\mathbb{R}^2}$ and points on the cylindrical surface $ \mathbb{C}$. Thus, a random collection of lines  can be constructed from a set of points on $\mathbb{C
	}$. In other words, the set of points generated by a PPP with certain density on $\mathbb{C}$ correspond to the PLP with the same density for lines on  ${\mathbb{R}^2}$. 
	
	For a PLP $\phi_{R}$ with the intensity $\lambda_R$ {within a circular region $\mathcal{B}(0,R)$, where radius $R$ $\in$ ${\mathbb{R}}$}, the corresponding points are independent and uniformly distributed in representation space $\mathbb{C}= [-R,R] \times [0,\pi]$  with  a surface area of $2\pi R$. Thus, the expected number of points in the PPP that lie in $\mathbb{C}$ is $2\pi \lambda_R R$, and the number of lines {intersecting} a disc of radius $R$ is a Poisson distributed with mean $2\pi \lambda_R R$. In PLP, the values of $\theta$ and $y$ of each line follow a uniform distribution over an appropriate range defined by $\mathbb{C}$. In this work, we limit ourselves to motion-invariant PLP  for analytical simplicity \cite{chetlur2017coverage}, where the line process is invariant to the rotation of axes to the origin. The PLP is also considered to be stationary, where translated line process $T\phi_{R} =$ $\left\{ {T(L_1), T(L_2),....} \right\}$ of PLP, $\phi_{R} = \left\{ {L_1, L_2,....} \right\}$ has the same distribution of lines as that of $\phi_{R}$ for any translation $T$ in the plane \cite{chetlur2017coverage}.

	\begin{figure}[!t]
		\centering
		\includegraphics[scale=0.28]{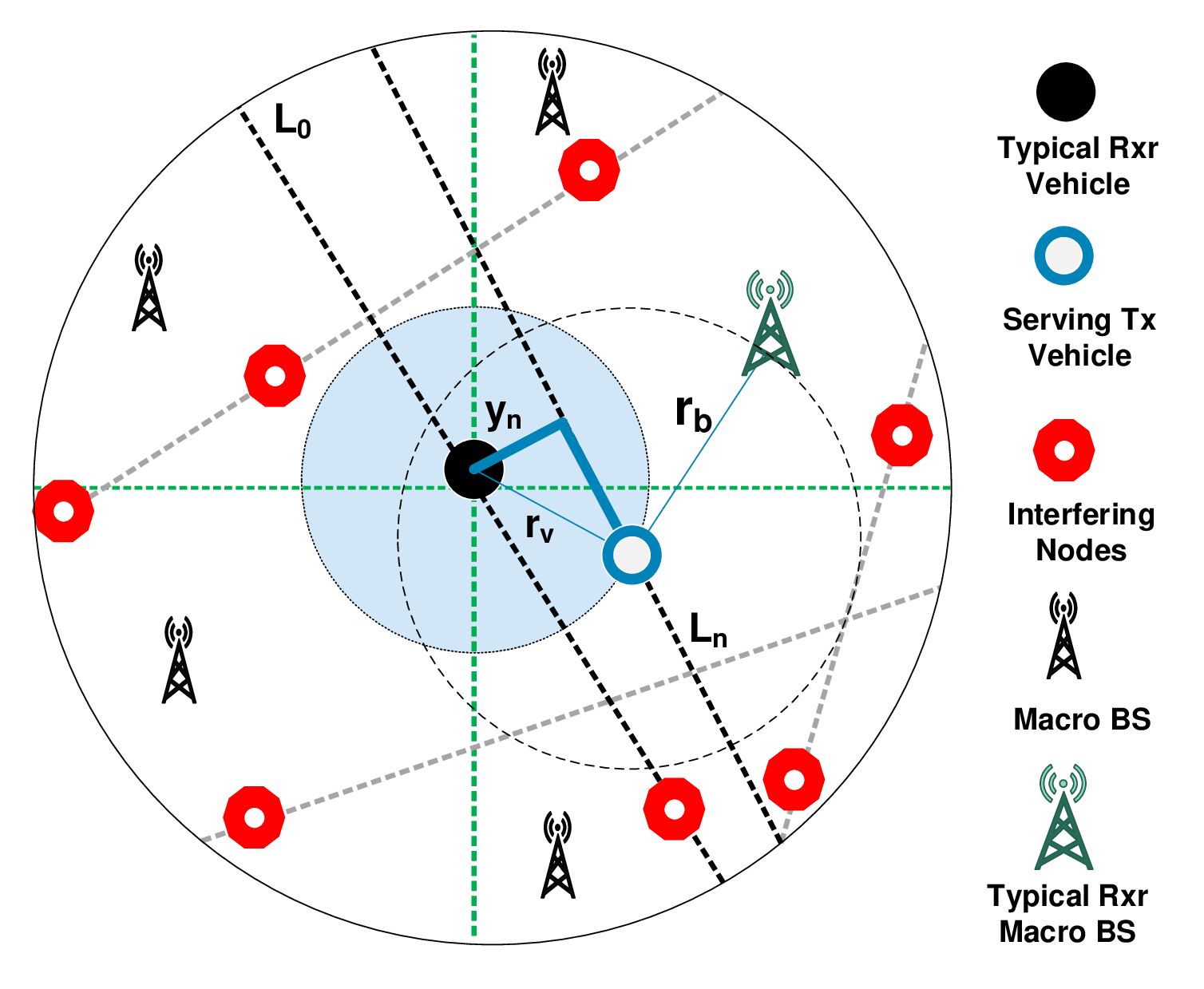}
		\caption{Illustration of the system model.}
		\label{Fig2SysModel}
	\end{figure}
	


	\section{System Model}
	\label{sec:Sysmodel}
	In this work, we consider a cellular V2X network with coexistence of V2V, V2P and V2I/N communications as per scenarios defined by 3GPP Release 14 and shown in Fig. \ref{Fig1ProposedModel}. In our system model, vehicle to vehicle messages can be exchanged through direct or infrastructure modes of C-V2X. The system model is described in detail in the following subsections.
	
	{\subsection{C-V2X Vehicular Nodes, Pedestrian and RSUs}
		As mentioned in Section \ref{PLPDef}, we model the roads as motion-invariant PLP $\phi_R$ with line intensity, $\lambda_R$ as per details given in \cite{choi2018analytical, guha2016cellular, chetlur2017coverage}, thus  the intensity of equivalent Poisson Point Process (PPP) on the representation space $\mathbb{C}$ is $\lambda_R$. The \emph{C-V2X vehicular nodes} are randomly distributed on each road as homogeneous 1D PPP with intensity $\mu_n$. {As RSUs are installed near roadside, the infrastructure mode communication through RSUs can be seen as two direct mode communication instances due to nature of deployment near roads (vehicle to RSU and RSU to vehicle). Moreover, pedestrian are also required to communicate with vehicles once they are near roads. Therefore, in this paper, we distribute pedestrian and RSUs on each road according to independent PPP's of intensities $\mu_p$ and $\mu_r$, receptively. The resulting distribution is still a PPP with intensity $\mu_v = \mu_n + \mu_r + \mu_p$ as discussed by} \cite{chetlur2017coverage,guha2016cellular}. However, the pedestrian and RSUs near roadside will have channel conditions different from vehicular nodes due to varying antenna heights which have been ignored in this paper for analytical tractability. For simplicity, the V2V, V2P and V2I communications are collectively called the direct mode communication for C-V2X throughout rest of the paper. {We use the term V2V communication throughout the paper to represent vehicle to vehicle communication without mode selection.}   
		
		%
		%
		%

		
		Assuming that each vehicle transmits independently with a probability $p$ {on a particular frequency resource}, the locations of transmitting vehicles on each road is then given by a thinned PPP with intensity, $\mu_t = p\mu_v$, and we denote the set of locations of the transmitting vehicles {operating at a particular frequency channel} on a line $L$ by {${W_L}$}. Correspondingly, the distribution of receiving vehicles on each line is also a thinned PPP with intensity, $\mu_r = (1-p)\mu_v$.
		In other words, the  transmitting and receiving vehicles are modeled as the doubly stochastic processes called Cox processes, $\phi_t$ and $\phi_r$, which are driven by the same PLP, $\phi_R$. 
		
		\begin{table}[htbp]\label{table1}
			\protect\caption{Notations}

			\begin{tabular}{|c|l|l|}
				\hline 
				Notations & Definition \tabularnewline
				\hline 
				\hline 
				$\lambda_R$ &Intensity of roads, 2D PLP  \tabularnewline
				\hline 
				$\mu_v$ &Intensity of C-V2X nodes, ID PPP \tabularnewline
				\hline 
				$\lambda_b$ &Intensity of base-stations, 2D PPP\tabularnewline
				\hline
				$\phi_R$ & Poisson Line Process (PLP) for roads\tabularnewline
				\hline
				$\phi_t$ & Cox process for transmitting nodes\tabularnewline
				\hline
				$\phi_r$ & Cox process for receiving nodes\tabularnewline
				\hline
				$\phi_b$ & 2D PPP for cellular base-stations\tabularnewline
				\hline
				$B$ & Association bias, $0 \text{ to } \infty$  \tabularnewline
				\hline 
				$P_{\mathbb{\mathit{b}}}$  &Base-station transmit power\tabularnewline
				\hline 
				$P_{\mathbb{\mathit{v}}}$ & Vehicle transmit power\tabularnewline
				\hline 
				$\alpha_v$ & Path loss exponent for V2V link or direct mode \tabularnewline
				\hline 
				$\alpha_b$& Path loss exponent for network mode \tabularnewline
				\hline 
				$y_n$ & Perpendicular distance of road from origin\tabularnewline
				\hline 
				$\theta$ & Angle of road from x-axis\tabularnewline
				\hline 
				$BW$ & C-V2X communication channel bandwidth, $10 \text{ MHz}$\tabularnewline
				\hline 
				V2V & Vehicle to vehicle link alone \tabularnewline
				\hline 
				V2B & Vehicle to base-station uplink in network mode \tabularnewline
				\hline 
				B2V & Base-station to vehicle downlink in network mode \tabularnewline
				\hline 
				$r_v$ & V2V distance \tabularnewline
				\hline 
				$r_b$ & V2B uplink distance  \tabularnewline
				\hline 
				$r_{b2v}$ & B2V downlink distance \tabularnewline
				\hline
				$\sigma^2$ & Thermal noise \tabularnewline
				\hline   
			\end{tabular}
		\end{table}
		
		For analytical simplicity, we can translate the origin $O=(0,0)$ to the location of the typical receiver vehicle. The translated point process $\phi_{r_{0}}$ can be treated as the superposition of the point process $\phi_r$, an independent 1D PPP with intensity $\mu_r$ on a line passing through the origin and a vehicle at the origin $O$. {This can be realized according to the following steps defined by \cite{chetlur2017coverage}.  We first add a point at the origin to the PPP in the representation space $\mathbb{C}$ by applying the Slivnyak's theorem \cite{chiu2013stochastic}}, thereby obtaining a PLP $\phi_{R_{0}}= \phi_{R} \cup {L_0}$ with a line $L_0$ passing through the origin in $\mathbb{R}^{2}$ and second, we add a point at the origin to the 1D-PPP on the line ${L_0}$ passing through the origin in $\mathbb{R}^{2}$ {by applying the Slivnyak's theorem \cite{chiu2013stochastic}}. The line, $L_0$ passing through the origin is referred as typical line in this paper. Since, both $\phi_t$ and $\phi_r$ are driven by the same line process, the translated point process $\phi_{t_{0}}$ is also the superposition of $\phi_t$ and an independent PPP with intensity $\mu_t$ on ${L_0}$. Note that the other receiving vehicles in the network do not interfere with the typical receiving vehicle in our setup, therefore, we focus only on the distribution of transmitting vehicles {which are operating on a particular frequency channel}. {In this work, the impact of the vehicle mobility and direction are neglected, similar to the work presented by \cite{tong2016stochastic,choi2018analytical}.}
		
		
		\subsection{V2N Network for Infrastructure Mode}
		
		The V2N segment of infrastructure mode consists of cellular macro base-stations (BSs). In this paper, the cellular BSs are spatially distributed in $\mathbb{R}^{2}$ according to the 2D PPP with intensity $\lambda_b$. This model for macro BSs has been validated to be as accurate to the typical hexagonal grid model \cite{jo2012heterogeneous}. We assume that each base-station in V2N will always have at least one vehicular node connected with it in the uplink. {Due to the shared spectrum between the direct and V2N links, interference will exist between them. For simplicity, the V2N communication in infrastructure mode are referred as network mode throughout rest of the paper.}

		
		\subsection{C-V2X Mode Selection Scheme}
		In our model, the vehicles transmit with the fixed power $P_v$. The mode selection between direct and network modes is determined by the corresponding V2V or V2B link distances, propagation environments and association bias factor $B$ to limit interference and traffic on cellular network. In other words, it can tune the trade-off between interference and data offloading \cite{elsawy2014analytical}. In this flexible C-V2X mode selection scheme, the vehicle selects the direct mode if $B{r_v^{-\alpha{_v}}} \geq r_{b}^{-\alpha{_b}} $, otherwise the vehicle selects the V2B link or network mode, where $r_v$ is the shortest  V2V link distance, and $r_b$ is the distance between the transmitting vehicular node and its closest cellular BS of V2N network. {The C-V2X mode selection is primarily being decided based on V2V link and cellular uplink distances whereas cellular downlink for receiving vehicle is being selected independent of uplink.} {Note that, it is possible to use independent associations of DL and UL transmissions, but there may exist location dependence due to the location of base stations in certain situations. Hence, the associations of DL and UL may be correlated even though their association rules are different. In this paper, the DL and UL related correlations due to location of base-stations, have been ignored and approximated results have been obtained in accordance with existing literature} \cite{lee2015hybrid,akbar2016simultaneous}. Note that for the extreme case with $B = 0$, the direct mode will never be selected, whereas for the case with $B =\infty$, each vehicle transmitter always selects the direct mode. It is worth mentioning that one main advantage of this mode selection criterion is that it brings an inherent interference protection to the cellular uplink, which is necessarily required to enable vehicular communication over shared channels. 
		
		
		\subsection{{Cellular Downlink Selection for Network Mode}}
		
		{In our model, message received by V2N cellular BS from transmitting vehicle in the uplink is delivered to nearby receiving vehicle on downlink. The downlink is established between receiving vehicle and cellular BS of V2N network based on maximum received power. In such case, message transmissions in uplink and reception in the downlink can be through two different cellular BSs. However, the downlink frequencies are different from the frequencies being used by direct mode communication or by uplink of V2N. Therefore, there will be no interference from vehicular nodes on downlink. However, cellular base-stations excluding the base-station with which receiving vehicle is associated will interfere in the downlink as per model presented by} \cite[Section-III]{andrews2016primer }.

		\subsection{Channel Model}
		A general power-law path-loss model is considered in which the signal power decays at the rate, $r^{-\alpha}$ with the propagation distance $r$, where $\alpha$ is the path-loss exponent. Due to the different propagation environments experienced in the network and direct modes, each type of mode is given its own path-loss exponent, namely, $\alpha_b$ and $\alpha_{v}$, respectively.
		The small-scale channel fading is modeled as slow-flat Rayleigh fading {as used by \cite{guha2016cellular,choi2018analytical,tong2016stochastic,farooq2016stochastic,elsawy2014analytical,jeyarajreliability}}, where its channel gain is assumed to be exponentially distributed with unit mean.  All the channel gains are assumed to be independent of each other, independent of the spatial locations, symmetric, and are identically distributed (i.i.d.). {We use Rayleigh fading for network and direct modes for its analytical tractability and this assumption is also the most popular in the literature to get closed-form expressions \cite{elsawy2013stochastic}. However, analytical expressions can be easily extended to interesting scenarios such as non-exponential fading\cite{choi2018analytical}.} For log normal shadowing, we have included the shadowing in a transparent way by using the displacement theorem given by \cite{baccelli2009stochastic,dhillon2014downlink}.  

		
		{\subsection{C-V2X Communication Reliability or Success Probability}}
		The reliability of C-V2X communication depends upon coverage probabilities of V2V link, cellular uplink and downlink. The overall success probability of C-V2X can be calculated by using the total probability law depending upon individual coverage probabilities of communication links and association probabilities of selected mode. In our model, the transmitting vehicle connects to a receiver (either cellular BS or vehicle) depending on the corresponding distances, propagation environments and association bias, and thus the  transmitting vehicle can operate in modes {$M \in \left\{ {v2b,v2v} \right\}$}, where $v2b$ and $v2v$ modes denote the shared link communication between vehicle and vehicle through cellular BSs in network mode and that between vehicle and vehicle in direct mode, respectively. In our model, once the vehicle operates in $v2b$ mode or selects cellular uplink, then cellular downlink or $b2v$ link is also selected whose success probability is calculated independent of uplink. The success probability (reliability) of the arbitrary receiver (base-station or vehicle) for V2V link or cellular uplink conditioned on minimum distance $r$ between transmitter and receiver and mode $M$ can be defined as the probability that the SINR of receiver is greater than a SINR threshold, $z$, which is given as
		\begin{equation}\label{SINR}
		P_M^{S}(z|r,M) = {\rm Pr}\left[ {\frac{{{P}  {r^{ - \alpha }}  h}}{{I + {\sigma ^2}}} > z} \right],
		\end{equation}
		where $P$ is transmit power of vehicle, $h$ is channel gain and $r$ is the minimum distance between transmitter and the receiver. {Similarly, the success probability for cellular downlink can be defined as the probability that the SINR of receiver vehicle is greater than a SINR threshold, $z$, which is given as} 
		
		\begin{equation}
		P_{b2v}^{S}(z|r_{b2v}) = {\rm Pr}\left[ {\frac{{{P_b}  {r_{b2v}^{ - \alpha }}  h}}{{I + {\sigma ^2}}} > z} \right],
		\end{equation}
		
		{where $P_b$ is transmit power of BS, $h$ is channel gain, $I$ is interference from BSs other than the selected BS and $r_b2v$ is the minimum distance between transmitter BS and the receiver vehicle.} 
		
		{In our system model, we translate the origin to the location of typical receiver vehicle. Then, we find the distance  $r_v$ between nearest transmitting test vehicle and typical receiver vehicle, which is shortest V2V link distance. Later, we find the distance ($r_b$) between the selected transmitting test vehicle and nearest BS which is the shortest V2B link distance. The transmitting test vehicle selects the direct mode if $B{r_v^{-\alpha{_v}}} \geq r_{b}^{-\alpha{_b}}$ otherwise the transmitting test vehicle selects the V2B link or network mode, where $r_v$ is the shortest V2V link distance, and $r_b$ is the shortest distance between the transmitting vehicular node and its closest cellular BS of V2N network. {In such case, the selected link distance ($r$) between transmitter (vehicle) and receiver (BS or vehicle) is minimum. Generally, there could be interfering transmitter inside the disk due to V2V or transmitter-receiver pairs other than selected link (link between receiver at origin and nearest transmitter). However, in our system model, receiving vehicles other than receiving vehicle at center have been ignored, therefore, there will be no interfering vehicles present in the disk $\mathcal{B}(0, r)$ with radius $r$} as shown in Fig.} \ref{Fig2SysModel}. {For this setup, we derive the coverage probability of receiver (cellular BS or vehicle), which connects to its closest transmitting test vehicle in the network at distance $r$, having no interfering vehicles present in the disk $\mathcal{B}(0, r)$.} The size of $\mathcal{B}(0, r)$ affects the distribution of interfering nodes, ${\Phi _I}$ as well as the interference to receiver. We need to characterize the distribution of interferers, ${\Phi _I}$ as per their locations (interferers located on road passing through origin or on all other roads) to determine the Laplace transform of the distribution of interference
		power conditioned on the serving distance $r$ and mode $M$. 
		
		Remind that $r$ is the minimum possible distance between the receiver and transmitter, and the interfering vehicles are located outside the disk $\mathcal{B}(0, r)$ {as per considered system model}. Therefore, the interferers can be broadly divided into two categories, (a) the interfering vehicles that are located outside the disk $\mathcal{B}(0, r)$ on the road passing through the origin and (b) the interfering vehicles located outside the disk $\mathcal{B}(0, r)$ on all other roads. The distribution of interferes located on all other roads can be denoted as ${\Phi _v|\mathcal{B}(0, r)}$ and the disk $\mathcal{B}(0, r)$ does not contain any interfering vehicles. Similarly, the distribution of interferes located on road passing through the origin can be denoted as ${\Phi _r|r}$ and there are no vehicles on the road segment from $0$ to $2r$ for road passing through origin. As such, the success probability given in Eq. (\ref{SINR}) can be rewritten as
		\begin{equation}\label{PsGen}
		P_M^{S}(z|r,M) = {\rm Pr}\left[ {\frac{{P  {r^{ - \alpha }}  h}}{{{I_v} + {I_r} + {\sigma ^2}}} > z} \right],
		\end{equation}
		where ${{\sigma ^2}}$ is the thermal noise power, ${I_v}$ and ${I_r}$ are the aggregate interference due to the vehicles located on all other roads, and on the road passing through the origin that operate in mode $v2b$ or $v2v$, respectively.  
		
		\section{{Association Probabilities for Mode Selection}}
		To facilitate the reliability analysis of proposed C-V2X communication over shared V2V and UL channels, we first derive the distance distributions of the vehicle to vehicle link in direct mode and the V2B link in network mode, and their corresponding association probability in the following.
		
		
		%
		%
		
		\begin{lem}[V2V Distance]
			\label{Lemma1}
			The CDF of the shortest distance $R_v$ between the vehicular transmitter and an arbitrary vehicular receiver in the V2V link, ${F_{{R_v}}}({{r_v}})$ is given in \cite{guha2016cellular}, we still present results below for completeness
			\begin{align}
			{F_{{R_v}}}({r_v}) &= 1 - \left[ {\exp \left( { - 2\pi {\lambda _R}\int\limits_{{y_n} = 0}^{{r_v}} {1 - {e^{ - 2{\mu _v}\sqrt {{r_v}^2 - {y_n}^2} }}} d{y_n}} \right)} \right.\nonumber
			\\&\Bigg. { \times \exp \left( { - 2{\mu _v}{r_v}} \right)} \Bigg].\label{cdfv2v}
			\end{align}		
		\end{lem}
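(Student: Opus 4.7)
The plan is to compute the complementary CDF $P(R_v > r_v)$ as the probability that no transmitting vehicle falls inside the disk $\mathcal{B}(0,r_v)$ centered at the typical receiver at the origin, and then take the complement. Since the Cox process $\phi_t$ is driven by the PLP $\phi_R$ together with the additional ``typical line'' $L_0$ through the origin (produced by applying Slivnyak's theorem on the PPP in the representation space $\mathbb{C}$, exactly as described in Section III), the contribution of $L_0$ is independent of the contribution of every other line. Hence I would factor
\begin{equation*}
P(R_v > r_v) \;=\; P_{L_0}(r_v)\,\cdot\,P_{\phi_R}(r_v),
\end{equation*}
where $P_{L_0}$ is the void probability on the typical line and $P_{\phi_R}$ the void probability on all remaining lines.

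For the typical line, conditionally on $L_0$ passing through the origin, the transmitters form an independent $1$D PPP of intensity $\mu_v$ (using the transmit-probability thinning only changes a constant and the lemma is stated with $\mu_v$ for compatibility with \cite{guha2016cellular}). Having no transmitter in the segment $L_0 \cap \mathcal{B}(0,r_v)$ of length $2r_v$ is a standard void event, giving $P_{L_0}(r_v)=\exp(-2\mu_v r_v)$, which will produce the second exponential factor in~\eqref{cdfv2v}.

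For the remaining lines, I would invoke the PGFL of the PPP on the representation cylinder $\mathbb{C}=[0,\pi]\times[-R,R]$ with intensity $\lambda_R$. A line with parameters $(\theta,y_n)$ intersects $\mathcal{B}(0,r_v)$ only when $|y_n|\le r_v$, and in that case the chord length is $2\sqrt{r_v^2-y_n^2}$, so conditionally on the line the probability of no transmitter inside the disk is $\exp(-2\mu_v\sqrt{r_v^2-y_n^2})$. Applying the PGFL then yields
\begin{equation*}
P_{\phi_R}(r_v)=\exp\!\left(-\lambda_R\!\int_{0}^{\pi}\!\!\int_{-r_v}^{r_v}\!\bigl(1-e^{-2\mu_v\sqrt{r_v^2-y_n^2}}\bigr)\,dy_n\,d\theta\right),
\end{equation*}
and the $\theta$-integration gives a factor $\pi$ while the $y_n$-integration doubles by symmetry, producing the $2\pi\lambda_R\int_0^{r_v}$ integral in~\eqref{cdfv2v}. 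Multiplying the two void probabilities and taking the complement yields exactly the stated CDF.

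The only subtle step is ensuring that the typical line really does contribute an \emph{independent} $1$D PPP on top of the original Cox process $\phi_t$; this needs to be justified by applying Slivnyak's theorem at the level of the PPP on $\mathbb{C}$ (to add the point corresponding to $L_0$) and then again on the $1$D PPP on $L_0$ (to add the typical receiver at the origin), so that the two void computations genuinely factorize. The remaining manipulations are routine evaluations of PGFLs and a geometric chord-length computation; the challenge lies in getting that conditioning structure right, since without it the two exponential factors in~\eqref{cdfv2v} would not separate cleanly.
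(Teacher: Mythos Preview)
Your proposal is correct and follows essentially the same approach as the paper: both decompose the void event into the contribution from the typical line $L_0$ and from the remaining PLP lines, handle $L_0$ via the 1D PPP void probability $e^{-2\mu_v r_v}$, and compute the remaining-lines factor via the chord length $2\sqrt{r_v^2-y_n^2}$. The only cosmetic difference is that you invoke the PGFL of the PPP on $\mathbb{C}$ directly, whereas the paper conditions on the number of intersecting lines, writes the product of per-line void probabilities, and then re-sums via the Taylor expansion of $e^x$; these are equivalent derivations of the same identity.
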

		
		


		\begin{cor}[The PDF of V2V Link]
			\label{cor7}			
			The Probability Density Function (PDF), ${f_{{R_v}}}({{r_v}})$ of the shortest distance  between the vehicular transmitter and a typical vehicular receiver, $R_v$ in the V2V link is derived as 		
			\begin{align}
			{f_{{R_v}}}({r_v}) &= \Bigg( { - 4\pi {\lambda _R}{\mu _v}\int\limits_{ 0}^{{r_v}} { {\frac{{{r_v}{e^{ - 2\mu \sqrt {{r_v}^2 - {y_n}^2} }}}}{{\sqrt {{r_v}^2 - {y_n}^2} }}} } d{y_n}} - 2{\mu _v} \Bigg)
			\nonumber
			\\& \times \Bigg[   - \exp \Big( { - 2\pi {\lambda _R}}\int\limits_{ 0}^{{r_v}} {\Big( {1 - {e^{ - 2{\mu _v}\sqrt {{R^2} - {r^2}} }}} \Big)} d{y_n} \Big. \Bigg.\nonumber
			\\& \Bigg.\left. { - 2{r_v}{\mu _v}} \right)\Bigg].\label{pdfv2vfinal2}
			\end{align}
			\begin{proof}
				The Probability Density Function (PDF) of $R_{v}$ can be found by taking derivative of CDF given in Eq. (\ref{cdfv2v}), ${f_{{R_v}}}({r_v}) = \frac{d}{{d{r_v}}}\left( {{F_{{R_v}}}({r_v})} \right)$ and final result for PDF of $R_v$ is given as Eq. (\ref{pdfv2vfinal2}). The closed form solution of Eq. (\ref{pdfv2vfinal2}) is derived as 
				\begin{align}\label{pdfv2vfinalclosedform}
				{f_{{R_v}}}({r_v}) &= \Bigg( { - 2{\pi ^2}{\lambda _R}{r_v}{\mu _v}\Bigg[{I_0}(2{r_v}\mu_v) - {L_0}(2{r_v}{\mu _v})\Bigg] - 2{\mu _v}} \Bigg)\nonumber
				\\ \nonumber 
				&\times \Bigg[ { - \exp \Bigg( { - 2\pi {\lambda _R}{r_v} + {\pi ^2}{\lambda _R}{r_v}\Big[{{L_{ - 1}}(2{r_v}{\mu _v})} \Bigg.} \Big.} \Big.
				\\ 
				&\Big. {\Big. {\Big. { - {I_1}(2{r_v}{\mu _v})} \Big] - 2{r_v}{\mu _v}} \Bigg)} \Bigg],
				\end{align}
				where $I_n(z)$ ($n=0,1$) denotes the modified Bessel functions of the first kind and $L_n(z)$ ($n=0,{-1}$) denotes the modified Struve functions. This completes the proof.	
			\end{proof} 
		\end{cor}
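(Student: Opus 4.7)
The plan is to obtain the PDF by directly differentiating the CDF $F_{R_v}(r_v)$ from Lemma~\ref{Lemma1}, using the elementary identity $f_{R_v}(r_v) = \tfrac{d}{dr_v}F_{R_v}(r_v)$. It will be convenient to abbreviate the inner integral as $g(r_v) = \int_0^{r_v}\bigl(1 - e^{-2\mu_v\sqrt{r_v^2 - y_n^2}}\bigr)\,dy_n$, so that Eq.~(\ref{cdfv2v}) can be rewritten as $F_{R_v}(r_v) = 1 - \exp\bigl(-2\pi\lambda_R g(r_v) - 2\mu_v r_v\bigr)$. The chain rule then immediately produces $f_{R_v}(r_v) = \bigl(2\pi\lambda_R g'(r_v) + 2\mu_v\bigr)\exp\bigl(-2\pi\lambda_R g(r_v) - 2\mu_v r_v\bigr)$, and absorbing an overall minus sign into the bracketed exponential factor reproduces the product structure displayed in Eq.~(\ref{pdfv2vfinal2}).

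The second step is to compute $g'(r_v)$ via the Leibniz integral rule, which requires care because both the upper limit and the integrand depend on $r_v$. The boundary contribution at $y_n = r_v$ is $1 - e^{0} = 0$ and therefore vanishes, leaving only the interior partial-derivative term $\frac{\partial}{\partial r_v}\bigl(-e^{-2\mu_v\sqrt{r_v^2 - y_n^2}}\bigr) = \frac{2\mu_v r_v\, e^{-2\mu_v\sqrt{r_v^2 - y_n^2}}}{\sqrt{r_v^2 - y_n^2}}$. Plugging the resulting $g'(r_v)$ into the expression of the previous step yields exactly the bracketed prefactor of Eq.~(\ref{pdfv2vfinal2}).

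For the closed form in Eq.~(\ref{pdfv2vfinalclosedform}), I would apply the trigonometric substitution $y_n = r_v\sin\phi$, under which $\sqrt{r_v^2 - y_n^2} = r_v\cos\phi$ and $dy_n = r_v\cos\phi\,d\phi$. Both remaining integrals then become trigonometric integrals of $e^{-2\mu_v r_v\cos\phi}$ against a small polynomial in $\cos\phi$ over $[0,\pi/2]$. These are precisely of the form covered by the standard integral representations of the modified Bessel functions $I_0, I_1$ and the modified Struve functions $L_0, L_{-1}$ (e.g.\ the Gradshteyn--Ryzhik identities $I_n(z) = \tfrac{1}{\pi}\int_0^\pi e^{z\cos\phi}\cos(n\phi)\,d\phi$ and the companion Struve representation), and recognising each integral as such a combination evaluated at argument $2r_v\mu_v$ and substituting back into Eq.~(\ref{pdfv2vfinal2}) delivers Eq.~(\ref{pdfv2vfinalclosedform}).

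The main obstacle I foresee is not the differentiation step, which is routine calculus, but rather this last identification: matching the two trigonometric integrals to exactly the right pair of Bessel and Struve indices (namely $\{0,1\}$ and $\{0,-1\}$), and then tracking all the signs and factors of $\pi$ through the substitution so that the prefactors line up with those in Eq.~(\ref{pdfv2vfinalclosedform}). A recurrence between $L_{-1}$ and $L_0$ (or between $L_0$ and $L_1$) is likely to be needed to express the answer in the precise mixed form $L_{-1} - I_1$ and $I_0 - L_0$ that the statement requires.
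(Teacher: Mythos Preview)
Your proposal is correct and follows exactly the paper's approach: differentiate the CDF of Lemma~\ref{Lemma1} to get Eq.~(\ref{pdfv2vfinal2}), then recognise the resulting integrals as Bessel and Struve functions for Eq.~(\ref{pdfv2vfinalclosedform}). In fact you supply considerably more detail than the paper, which simply asserts ${f_{R_v}}(r_v) = \frac{d}{dr_v}F_{R_v}(r_v)$ and then states the closed form without showing the Leibniz differentiation, the vanishing boundary term, or the trigonometric substitution.
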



		
		\begin{prop}[{V2B and B2V Distances}]
			\label{Lemma22}
			The PDF  of the distance between a vehicular transmitter and the nearest BS for uplink, $R_B$ for the selected link ${f_{{R_B}}}({{r_B}})$ is given in \cite[Eq. (2)]{novlan2013analytical} as
			\begin{equation} \label{V2Bpdf}
			{f_{{R_b}}}(r_b) = 2\pi {\lambda _b}r_b{e^{ - \pi {\lambda _b}{r_b^2}}}. 
			\end{equation}
		\end{prop}
		
		
		%

		
		{The distance distribution between a cellular BS and the nearest receiving vehicle in the downlink, $R_{B2V}$ for the selected link, ${f_{{R_{B2V}}}}({{r_{B2V}}})$ will be same as given in} (\ref{V2Bpdf}). In Lemma \ref{Lemma2} and Lemma \ref{Lemma3}, we derive the association probabilities of the V2V link and the  V2B  link, respectively.
		
		%
		
		\begin{lem}[{Association probability for the Direct Mode}]
			\label{Lemma2}
			The probability of the vehicular transmitter selecting the direct mode is derived as 
			\begin{align} \label{V2VProbincomplete}
			{P_{v2v}^{A}} &= \int\limits_0^\infty  { - \exp \Bigg( { - 2\pi {\lambda _R}{r_v} + {\pi ^2}{\lambda _R}{r_v}\Big[{L_{ - 1}}(2{r_v}{\mu _v})} \Bigg.} \nonumber 
			\\&\Bigg. { - {I_1}(2{r_v}{\mu _v})\Big] - 2{r_v}{\mu _v}} \Bigg)\nonumber 
			\\\nonumber&\times \Bigg( { - 2{\pi ^2}{\lambda _R}{r_v}{\mu _v}\Bigg[{I_0}(2{r_v}\mu ) - {L_0}(2{r_v}{\mu _v})\Bigg] - 2{\mu _v}} \Bigg)
			\\&\times \exp \left( { - \pi {\lambda _B}{{\left( {\frac{{r_v^{\frac{{{\alpha _v}}}
										{{{\alpha _b}}}}}}
							{{{B^{\frac{1}
											{{{\alpha _b}}}}}}}} \right)}^2}} \right)d{r_v},
			\end{align}
			where $\lambda_B$ is the cellular base-station intensity, $\mu _v$ is the C-V2X nodes intensity, $\lambda_R$ is the road intensities, $B$ is the association bias, $I_n(z)$ $(n=0, 1)$ is the modified Bessel functions of the first kind, and $L_n(z)$ $(n=0,{-1})$ is the modified Struve functions.	
			\begin{proof}
				See Appendix \ref{AppendixB}.
			\end{proof}
		\end{lem}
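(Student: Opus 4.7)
The plan is to treat the V2V association event as a threshold event on the independent random variables $R_v$ and $R_b$, then integrate. Starting from the link-selection rule $B R_v^{-\alpha_v} \ge R_b^{-\alpha_b}$, I would first invert this inequality into a direct condition on $R_b$, namely $R_b \ge (R_v^{\alpha_v}/B)^{1/\alpha_b}$, so the association probability becomes
\begin{equation*}
P_{v2v}^{A} = \Pr\!\left[ R_b \ge \left(\tfrac{R_v^{\alpha_v}}{B}\right)^{1/\alpha_b}\right].
\end{equation*}
Since the cellular BS process $\phi_b$ is an independent 2D PPP and is constructed independently of the PLP $\phi_R$ and the Cox process of vehicles, $R_v$ and $R_b$ are independent. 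This justifies conditioning on $R_v=r_v$ and writing
\begin{equation*}
P_{v2v}^{A} = \int_{0}^{\infty} \Pr\!\left[R_b \ge \left(\tfrac{r_v^{\alpha_v}}{B}\right)^{1/\alpha_b}\right] f_{R_v}(r_v)\, dr_v.
\end{equation*}

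Next, I would compute the inner CCDF directly from Lemma \ref{Lemma22}: integrating $f_{R_b}(r_b)=2\pi\lambda_b r_b e^{-\pi\lambda_b r_b^2}$ from the threshold $(r_v^{\alpha_v}/B)^{1/\alpha_b}$ to infinity yields
\begin{equation*}
\Pr\!\left[R_b \ge \left(\tfrac{r_v^{\alpha_v}}{B}\right)^{1/\alpha_b}\right] = \exp\!\left(-\pi\lambda_b\left(\tfrac{r_v^{\alpha_v/\alpha_b}}{B^{1/\alpha_b}}\right)^{\!2}\right),
\end{equation*}
which matches the exponential factor appearing in the statement. The remaining step is to substitute the closed-form PDF $f_{R_v}(r_v)$ derived in Corollary \ref{cor7} (the expression involving $I_0,I_1,L_0,L_{-1}$) into the integrand, which reproduces Eq.~(\ref{V2VProbincomplete}) verbatim.

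The only subtlety worth flagging, as opposed to routine calculation, is the independence claim between $R_v$ and $R_b$. This requires observing that the vehicular Cox process (and in particular the typical line $L_0$ added via Slivnyak) is independent of $\phi_b$, so the joint distribution factorizes and the conditioning step above is legitimate. Once independence is in hand, the rest of the proof is algebraic: invert the selection inequality, apply the CCDF of $R_b$, and plug in the PDF of $R_v$ from Corollary \ref{cor7}. I would expect the main obstacle in presentation (rather than in mathematics) to be careful bookkeeping of the exponent $\alpha_v/\alpha_b$ and the bias $B^{1/\alpha_b}$ when forming the threshold distance, since an error there propagates into the exponential and breaks the match with Eq.~(\ref{V2VProbincomplete}).
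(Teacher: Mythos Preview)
Your proposal is correct and follows essentially the same route as the paper's proof in Appendix \ref{AppendixB}: invert the selection rule into a condition $R_b \ge r_v^{\alpha_v/\alpha_b}/B^{1/\alpha_b}$, apply the void-probability/CCDF of the 2D PPP for $R_b$ to obtain the exponential factor, and then decondition by integrating against the closed-form PDF $f_{R_v}(r_v)$ from Corollary \ref{cor7}. The only addition in your write-up is the explicit justification of independence between $R_v$ and $R_b$, which the paper uses implicitly.
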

		
		\begin{lem}[Association probability of the V2B  Link {in Network Mode}]
			\label{Lemma3}
			The probability of the vehicular transmitter selecting V2B link is derived as
			\begin{align} \label{V2BAssProb}
			{P_{v2b}^{A}} &= \int\limits_0^\infty  {\exp \Bigg[ { - 2\pi {\lambda _R}} \Bigg.}  \times \int\limits_{{y_n} = 0}^{{B^{\frac{1}
						{{{\alpha _v}}}}}r_b^{\frac{{{\alpha _b}}}
					{{{\alpha _v}}}}} {1 - \exp \Bigg( { - 2{\mu _v}} \Bigg.} \nonumber 
			\\&\left. {\left. { \times \sqrt {{{\left( {{B^{\frac{1}
											{{{\alpha _v}}}}}r_b^{\frac{{{\alpha _b}}}
										{{{\alpha _v}}}}} \right)}^2} - {y_n}^2} } \right) \times d{y_n}} \right]\nonumber 
			\\&\times \exp \left( { - 2{\mu _v}{B^{\frac{1}
						{{{\alpha _v}}}}}r_b^{\frac{{{\alpha _b}}}
					{{{\alpha _v}}}}} \right) \times 2\pi {\lambda _b}{r_b}\exp \left[ { - \pi {\lambda _b}{r_b}^2} \right]d{r_b}	
			\end{align}
			where $\lambda_B$ is the cellular base-station intensity, $\mu _v$ is the vehicular nodes intensity, $\lambda_R$ is the road intensities, $B$ is the association bias.
			\begin{proof}
				See Appendix \ref{AppendixC}.
			\end{proof}
		\end{lem}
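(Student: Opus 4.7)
The plan is to view the V2B association as the complementary event of the V2V association, conditioned on the base-station distance. From the link selection rule of Section~\ref{sec:Sysmodel}, the vehicle selects the V2B link exactly when $B R_v^{-\alpha_v} < R_b^{-\alpha_b}$, which on rearrangement is equivalent to $R_v > B^{1/\alpha_v} R_b^{\alpha_b/\alpha_v}$. I would therefore condition on $R_b = r_b$ and integrate the complementary CDF of $R_v$ evaluated at the resulting threshold, weighted by the PDF of $R_b$:
\begin{equation*}
P_{v2b}^{A} \;=\; \int_0^\infty \Pr\!\left(R_v > B^{1/\alpha_v}\, r_b^{\alpha_b/\alpha_v}\right)\, f_{R_b}(r_b)\, dr_b.
\end{equation*}

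The two ingredients needed are already in hand. The complementary CDF $1-F_{R_v}(r_v)$ is read off from Lemma~\ref{Lemma1}: it is the product of (i) the void probability of the Cox process on all non-typical roads within a disk of radius $r_v$, namely $\exp\!\bigl(-2\pi\lambda_R\int_0^{r_v}(1-e^{-2\mu_v\sqrt{r_v^2-y_n^2}})\,dy_n\bigr)$, and (ii) the void probability $\exp(-2\mu_v r_v)$ of the 1D PPP along the typical line over the segment of length $2r_v$ centred at the origin. The PDF of $R_b$ is supplied by Lemma~\ref{Lemma22} as $2\pi\lambda_b r_b e^{-\pi\lambda_b r_b^2}$. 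Substituting $r_v \mapsto B^{1/\alpha_v} r_b^{\alpha_b/\alpha_v}$ inside the CCDF and combining with $f_{R_b}$ yields precisely the triple-nested integrand stated in the lemma.

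Independence deserves a brief justification: $R_v$ is a functional of the PLP-driven Cox process $\phi_t$ together with the typical line added via Slivnyak's theorem, whereas $R_b$ is a functional of the 2D PPP $\phi_b$ of base stations, which by assumption is drawn independently of $\phi_R$ and the associated Cox processes. Hence the joint density factorizes and the conditioning step is valid. I do not expect a serious obstacle here: unlike the V2V association probability in Lemma~\ref{Lemma2}, which required inverting the V2V CDF and then integrating against the V2B PDF (thereby inviting the Bessel/Struve closed form of Corollary~\ref{cor7}), Lemma~\ref{Lemma3} goes in the opposite direction and naturally remains an integral over $r_b$ with the inner $y_n$ integral untouched, so no special-function identities are required. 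The only care needed is to keep the exponent $\alpha_b/\alpha_v$ correct on the inverted threshold $B^{1/\alpha_v} r_b^{\alpha_b/\alpha_v}$ throughout the substitution.
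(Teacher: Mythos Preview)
Your proposal is correct and follows essentially the same approach as the paper's proof in Appendix~\ref{AppendixC}: condition on $R_b=r_b$, rewrite the V2B selection event as $R_v > B^{1/\alpha_v} r_b^{\alpha_b/\alpha_v}$, substitute this threshold into the CCDF of $R_v$ from Lemma~\ref{Lemma1}, and then decondition by integrating against $f_{R_b}$ from Lemma~\ref{Lemma22}. Your explicit independence justification between the Cox process and the BS PPP is a welcome addition that the paper leaves implicit.
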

		
		\section{C-V2X Success Probability Over Shared {V2V and Cellular Uplink Channels}}
		In this section, we derive the success probability of the C-V2X communication where vehicles can select direct or network modes. To do so, we first need to characterize the interference from each type of interferer category ($I_v, I_r$). Thus, we calculate the general form of Laplace Transform, and we derive the expressions of Laplace Transform of interference from $I_v$ and $I_r$ in this section.  
		\subsection{Laplace Transform of Interference Under Rayleigh Fading}
		As we know that Laplace Transform of interference, $I_X$ is ${{\cal L}_{{I_X}}}\left( s \right) = {E_{{I_x}}}\left[ {{e^{ - s{I_x}}}} \right]$.
		The interfering set of vehicles ${\rm X}$ for each category of interfering vehicles $(I_v,I_r)$ based on their location can be represented as
		\begin{equation}\label{eq9}
		{I_X} = \sum\limits_{x \in X} {{P_v}} {h_x}D_{^x}^{ - \alpha},
		\end{equation}
		where $X$ represent various interference sources $(I_v,I_r)$ and $P_v$ is the transmit power of the vehicle. For a vehicle ${x \in X}$, we denote its distance to the typical receiver as ${{D_x}}$. Although the random variables ${\left[ {{D_x}} \right]_{x \in X}}$, are identically distributed, they are not independent in general \cite{novlan2013analytical}. However, authors in \cite{novlan2013analytical} have shown that this dependence is weak and we will henceforth, assume each ${{D_x}}$ to be i.i.d. Due to the different propagation environments experienced in the network and direct mode or V2V communication, each type of mode will have its own path-loss exponent and $\alpha$ can be replaced with $\alpha_b$ or $\alpha_{v}$ in (\ref{eq9}) as per selected link. The expression for ${{\cal L}_{{I_X}}}\left( s \right)$ is given as
		\begin{equation}
		{{\cal L}_{{I_X}}}\left( s \right) = {\mathbb{E}_{{I_X}}}\left[ {{e^{ - \sum\limits_{x \in X} {s{P_v}{h_x}D_{^x}^{ - \alpha }} }}} \right].
		\end{equation}
		By taking expectation over ${{h_x}, D_{x}}$, we get
		\begin{equation}
		{{\cal L}_{{I_X}}}\left( s \right) = {\mathbb{E}_{{h_x},{D_x}}}\left[ {\prod\limits_{x \in X} {\exp \left( { - s{P_v}{h_x}D_{^x}^{ - \alpha }} \right)} } \right].
		\end{equation}
		By assuming all ${{{h_x}}}$ as independent, we obtain
		\begin{equation}
		{{\cal L}_{{I_X}}}\left( s \right) = {\mathbb{E}_{{D_x}}}\left[ {\prod\limits_{x \in X} {{E_{{h_x}}}\left[ {\exp \left( { - s{P_v}{h_x}D_{^x}^{ - \alpha }} \right)} \right]} } \right].
		\end{equation}
		Based on the fact that $h \sim \exp \left( 1 \right)$, we obtain
		\begin{equation} \label{conditionalLaplace}
		{{\cal L}_{{I_X}}}\left( s \right) = {\mathbb{E}_{{D_x}}}\left[ {\prod\limits_{x \in X} {\left[ {\frac{1}{{1 + s{P_v}R_x^\alpha D_{^x}^{ - \alpha }}}} \right]} } \right].
		\end{equation}
		Now, we calculate the Laplace transform of interference for each category of road (${I_v}$ and ${I_r}$) in the following. Let us denote the outer circular region in which all roads exist as $\mathcal{B}(0,R)$, and inner circular region $\mathcal{B}(0,r)$ with radius $r$ having minimum distance between transmitter and receiver as shown in {Fig. \ref{Fig2SysModel}.} Let us denote two types of road as $R_{in}$ and $R_{out}$, where $R_{in}$ are the roads intersecting the circular region $\mathcal{B}(0,r)$, and $R_{out}$ are the roads {that} lie outside the circular region $\mathcal{B}(0,r)$ and within circular region $\mathcal{B}(0,R)$. Thus, road $R_{in}$ is located at distance, $y<r$ and in case of $R_{out}$, it is located at distance $y>r$. In this case, the interferers can be located anywhere on road $R_{out}$ as it is located outside $\mathcal{B}(0,r)$. However, in case of $R_{in}$, the interferers will be located in regions between $(-\sqrt {{R}^2 - {y}^2}, -\sqrt {{r}^2 - {y}^2})$ and 
		$(\sqrt {{R}^2 - {y}^2}, \sqrt {{r}^2 - {y}^2})$. 
		In the following, we calculate the Laplace Transform for the interferences from the vehicular transmitters located in these two types of roads (i.e., $R_{in}$ and $R_{out}$).

		\begin{cor}[Laplace Transform of Interference for Single Road Located Outside Inner Circular Region, $\mathcal{B}(0,r)$]
			\label{cor1}
			The conditional Laplace transform of interference at typical receiver, originating from a single road located at a distance $y$ ($y>r$), outside inner circular region, $\mathcal{B}(0,r)$ with radius $r$ is expressed as 
			\begin{align} \label{LaplaceSingleRoad2}
			{{\cal L}_{{I_{{R_{out}}}}}}\left( {s|r} \right) &= \exp \Bigg[ { - 2{\mu _v}}  {  \int\limits_{ 0}^\infty  {\left[ {1 - \frac{1}{{1 + s{P_v}{{\left( {{y^2} + {t^2}} \right)}^{\frac{{ - \alpha }}{2}}}}}} \right]dt} } \Bigg] .
			\end{align}
			For $\alpha=4$, the closed form solution can be simplified as 	
			\begin{equation} \label{LaplaceSingleRoad2alpha4}
			{{\cal L}_{{I_{{R_{out}}}}}}\left( {s|r} \right) = \exp \left( { - {\mu _v}\frac{{\pi {r^2}\sin \left( {\frac{1}{2}{{\tan }^{ - 1}}\left( {\frac{{{r^2}}}{{{y^2}\sqrt {\frac{1}{z}} }}} \right)} \right)}}{{\sqrt {\frac{1}{z}} \sqrt[4]{{{y^4} + {r^4}z}}}}} \right).
			\end{equation}
			\begin{proof}
				See Appendix \ref{AppendixD}.
			\end{proof}
		\end{cor}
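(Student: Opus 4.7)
The plan is to start from the general conditional Laplace transform derived earlier in Eq.~(\ref{conditionalLaplace}), specialize the interferer set $X$ to the transmitting vehicles lying on the single road $R_{out}$, and then invoke the probability generating functional (PGFL) of the 1D PPP that governs vehicle locations on that road.

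First I would parametrize the road $R_{out}$ by arc-length. Let $t \in \mathbb{R}$ denote signed distance along the road, measured from the foot of the perpendicular dropped from the origin onto $R_{out}$. Since the perpendicular distance from the origin to the road is $y$, a vehicle located at position $t$ on the road is at Euclidean distance $D_x = \sqrt{y^2 + t^2}$ from the typical receiver at the origin. Transmitting vehicles along this road form a 1D PPP with intensity $\mu_v$, so applying the PGFL for a Poisson process with the kernel $f(x)=1-1/(1+sP_v(y^2+t^2)^{-\alpha/2})$ turns the product in Eq.~(\ref{conditionalLaplace}) into an exponential of an integral.

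Because $y>r$ by hypothesis, the entire road $R_{out}$ lies outside the exclusion disc $\mathcal{B}(0,r)$, so no interval on the road has to be excised, and the PGFL integral runs over all of $\mathbb{R}$. Exploiting the symmetry $t \mapsto -t$ of the integrand, I would fold the domain to $(0,\infty)$ and pull out a factor of $2$, which yields precisely the expression in Eq.~(\ref{LaplaceSingleRoad2}). For the special case $\alpha=4$, I would compute the resulting improper integral in closed form by substituting $u=t^2$ and using a standard partial-fraction decomposition of the rational function $1/((1+sP_v(y^2+u)^{-2}))$, which leads to elementary integrals involving $\arctan$; matching notation with $z$ and collecting terms reproduces Eq.~(\ref{LaplaceSingleRoad2alpha4}).

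The main obstacle I anticipate is purely bookkeeping: I must be careful to ensure the exclusion region has no effect (which is guaranteed only because $y>r$), and to correctly propagate the substitution that relates the Laplace parameter $s$ to the SINR threshold and serving distance so that the arguments inside the exponential in the $\alpha=4$ closed form line up with the author's notation. Aside from that, both steps reduce to invoking the PGFL of a 1D PPP and evaluating a one-dimensional integral, so no substantial obstruction is expected.
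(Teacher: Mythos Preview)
Your proposal is correct and reaches the same result, but the route differs from the paper's. You invoke the PGFL of the 1D PPP directly on the whole line and then fold by symmetry; the paper instead works inside a large disc $\mathcal{B}(0,R)$, conditions on the (Poisson) number $m$ of vehicles on the chord of length $2\sqrt{R^2-y^2}$, uses that conditionally the points are i.i.d.\ uniform on the chord, and then sums over $m$ via the Taylor expansion $e^x=\sum_m x^m/m!$ to recover the exponential form, before letting the chord extend to infinity. In effect the paper re-derives the PGFL from first principles, whereas you cite it as a known tool. Your approach is shorter and more standard; the paper's is more self-contained and makes the role of the bounding region $\mathcal{B}(0,R)$ explicit. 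The $\alpha=4$ evaluation is handled only sketchily in both treatments, so your bookkeeping caveat about matching the $s\leftrightarrow z$ substitution is well placed.
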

		\begin{cor}[Laplace Transform of Interference for Single Road Intersecting the Inner Circular Region, $\mathcal{B}(0,r)$]
			\label{cor3}
			The conditional Laplace transform of interference at typical receiver, originating from a single road located at distance $y$ ($y<r$), intersecting the inner circular region, $\mathcal{B}(0,r)$ with radius $r$ is derived as 
			\begin{align} \label{LaplaceSingleRoad1}
			{{\cal L}_{{I_{R_{in}}}}}\left( {s|r} \right) &= \exp \Bigg[ { - 2{\mu _v}}\nonumber  \\&\times{  \int\limits_{ \sqrt {{r^2} - {y^2}} }^\infty  {\left( {1 - \frac{1}{{1 + s{P_v}{{\left( {{y^2} + {t^2}} \right)}^{\frac{{ - \alpha }}{2}}}}}} \right)} dt} \Bigg].
			\end{align}	
			\begin{proof}
				For this case, the value of $t$ varies from $t=(-\sqrt {{r}^2 - {y}^2})$ to $(\sqrt {{r}^2 - {y}^2})$ and the range of region in which the interfering nodes will be located is $(-\infty, -\sqrt {{r}^2 - {y}^2})$ and $(\sqrt {{r}^2 - {y}^2}, \infty)$. Therefore, the Laplace Transform of interference can be calculated by changing the limits of Eq.  (\ref{LaplaceSingleRoad2}) and is given as Eq. (\ref{LaplaceSingleRoad1}). This completes the proof. 
			\end{proof}
		\end{cor}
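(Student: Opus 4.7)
The plan is to mirror the derivation of Corollary~\ref{cor1}, replacing the full line of integration with the chord-complement on a line that actually cuts the inner disc $\mathcal{B}(0,r)$. Because the serving receiver is the nearest transmitter, no interferer may lie inside $\mathcal{B}(0,r)$, and for a road at perpendicular distance $y<r$ this excludes a chord of length $2\sqrt{r^2-y^2}$ from the 1D PPP of transmitters on that line.

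First, I would coordinatize the road line by its arc-length parameter $t\in\mathbb{R}$ measured from the foot of the perpendicular, so that a point at parameter $t$ sits at Euclidean distance $\sqrt{y^2+t^2}$ from the origin. The condition for this point to belong to $\mathcal{B}(0,r)$ reduces to $|t|\le\sqrt{r^2-y^2}$, which is nontrivial precisely when $y<r$; the admissible interferer locations on this line are therefore $|t|>\sqrt{r^2-y^2}$.

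Next, starting from the conditional expression in Eq.~\eqref{conditionalLaplace} and taking the Rayleigh fading expectation, so that $\mathbb{E}_{h}[e^{-sP_v h D^{-\alpha}}] = (1+sP_v D^{-\alpha})^{-1}$, I would invoke the PGFL of the 1D Poisson process of intensity $\mu_v$ restricted to the admissible subset of the line. This yields the exponential of a two-sided integral over $|t|>\sqrt{r^2-y^2}$ with kernel $1-(1+sP_v(y^2+t^2)^{-\alpha/2})^{-1}$, which by evenness of the integrand in $t$ collapses to the one-sided integral from $\sqrt{r^2-y^2}$ to $\infty$ with prefactor $-2\mu_v$, recovering Eq.~\eqref{LaplaceSingleRoad1}.

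Since all the probabilistic machinery is identical to that of Corollary~\ref{cor1}, the only substantive work is geometric. The main obstacle will be justifying that the chord-exclusion is the correct conditioning (a hard cutoff on the support of the PPP) rather than any sort of intensity thinning on the chord, and making sure the hypothesis $y<r$ is used consistently when converting the Euclidean constraint $y^2+t^2\ge r^2$ into the parametric lower limit; once the line is fixed, the conditioned Cox process along it reduces to a plain 1D PPP, so no further measure-theoretic subtleties arise.
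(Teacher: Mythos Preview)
Your proposal is correct and follows essentially the same approach as the paper: identify the excluded chord $|t|\le\sqrt{r^2-y^2}$ on the line at perpendicular distance $y<r$, then integrate the same kernel as in Corollary~\ref{cor1} over the complement and use evenness to reduce to a one-sided integral starting at $\sqrt{r^2-y^2}$. The paper's own proof is even terser---it simply states that one obtains Eq.~\eqref{LaplaceSingleRoad1} by changing the lower limit in Eq.~\eqref{LaplaceSingleRoad2}---so your explicit invocation of the PGFL and the geometric justification of the cutoff are, if anything, a slightly more detailed version of the same argument.
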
	
		
		Based on the results in Corollary \ref{cor1} and \ref{cor3}, we can derive the Laplace transform of the interference from the vehicles located on the road passing through the origin and  that located on all other roads in Corollary \ref{cor4} and \ref{cor8}, respectively. 
		
		\begin{cor}[Laplace Transform of Interference from Vehicles Located on Road Passing Through Origin]
			\label{cor4}
			The conditional Laplace transform of interference at typical receiver, originating from road located at a distance of $y=0$ is derived as 
		\end{cor}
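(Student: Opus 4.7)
The plan is to recognize that the road passing through the origin is the limiting case of an ``intersecting'' road with perpendicular distance $y=0$, so the result follows directly from Corollary~\ref{cor3}. Substituting $y=0$ into Eq.~\eqref{LaplaceSingleRoad1}, the lower integration limit $\sqrt{r^2-y^2}$ collapses to $r$ and the path-loss kernel $(y^2+t^2)^{-\alpha/2}$ reduces to $t^{-\alpha}$, so one expects
\begin{equation*}
{\cal L}_{I_r}(s\mid r) = \exp\!\left[-2\mu_v \int_{r}^{\infty}\!\left(1 - \frac{1}{1 + sP_v\, t^{-\alpha}}\right)dt\right],
\end{equation*}
where the prefactor $2\mu_v$ absorbs the contributions of both half-lines $t\le -r$ and $t\ge r$ by the symmetry $t\mapsto -t$ of the line through the origin.

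To justify this specialization from first principles, I would retrace the chain used for Corollary~\ref{cor3}: start from the general conditional Laplace expression in Eq.~\eqref{conditionalLaplace}, apply the probability generating functional of the 1D PPP of intensity $\mu_v$ along the origin line, and take the expectation over the Rayleigh fading $h_x\sim\exp(1)$. Slivnyak's theorem, already invoked in the construction of $\phi_{t_{0}}$, guarantees that after conditioning on a receiver at the origin the remaining interferers on the typical line $L_0$ still form a stationary 1D PPP of intensity $\mu_v$, so only the nearest-transmitter exclusion around the origin needs to be enforced on the integration domain. Where tractable, I would also record a closed-form evaluation of the exponent for the common path-loss value $\alpha=4$ using the elementary antiderivative of $1/(1+t^{4}/c)$ in terms of arctangents and logarithms, paralleling the simplification already displayed in Eq.~\eqref{LaplaceSingleRoad2alpha4}.

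The main obstacle I anticipate is not analytic but a bookkeeping issue: the paper earlier describes the exclusion zone on the origin line as a ``road segment from $0$ to $2r$'' rather than the symmetric interval $(-r,r)$. I must therefore verify that the closest-transmitter condition actually removes a symmetric zone of total length $2r$ centered on the receiver, in which case the $[0,2r]$ phrasing merely records the length of the exclusion rather than its signed endpoints. Once this convention is reconciled, the proof collapses to the single substitution $y=0$ in Corollary~\ref{cor3} and reuses the PGFL+Rayleigh argument already deployed twice in the paper, so no new machinery is needed.
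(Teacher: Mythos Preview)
Your proposal is correct and matches the paper's own proof essentially line for line: the paper likewise obtains Eq.~\eqref{LaplaceSingleRoadthroughcentre} simply by invoking Corollary~\ref{cor3} with $y=0$, then records the closed-form evaluation. The only addition in the paper that you do not explicitly mention is the general-$\alpha$ closed form in terms of the Gauss hypergeometric function ${}_2F_1$ (Eq.~\eqref{LaplaceSingleRoadthroughcentrealphagenral}), obtained by the substitution $u=(t/r)^{\alpha}$ before specializing to $\alpha=4$; otherwise your plan, including the symmetry-based reconciliation of the exclusion interval, is exactly what the paper does.
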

		\begin{equation} \label{LaplaceSingleRoadthroughcentrealphagenral}
		\begin{array}{l}
		{{\cal L}_{{I_r}}}\left( {s|r} \right) =   
		\exp \Big[ - \frac{{2 \times r \times z \times \mu_v \,{ \times _2}{F_1}\left( {1,\frac{{\alpha  - 1}}{\alpha },2 - \frac{1}{\alpha }, - z} \right)}}{{\alpha  - 1}} \Big],\alpha  > 1. \\ 
		\end{array}
		\end{equation}
		where ${_2F_1}(a,b,c,z)$ is the Hypergeometric function and $z$ is the SINR threshold. For $\alpha=4$, the closed form expression for the above equation is simplified as 	
		\begin{align} \label{LaplaceSingleRoadthroughcentrealpha4}
		{{\cal L}_{{I_r}}}\left( {s|r} \right) &= \exp \left[ {- \frac{{r \times \sqrt[4]{z}{\mu _v}}}{{\sqrt 2 }} \times \left( { - {{\tan }^{ - 1}}\left( {\frac{{\sqrt 2 }}{{\sqrt[4]{z}}} + 1} \right)} \right.} \right.\nonumber
		\\&\left. {\left. { + {{\tan }^{ - 1}}\left( {1 - \frac{{\sqrt 2 }}{{\sqrt[4]{z}}}} \right) - {{\coth }^{ - 1}}\left( {\frac{{\sqrt z  + 1}}{{\sqrt 2 \sqrt[4]{z}}}} \right) + \pi } \right)} \right].
		\end{align}
		\begin{proof}
			The conditional Laplace transform of interference can be calculated using Corollary \ref{cor3} and  $y=0$, and the resultant equation is derived as
			\begin{equation} \label{LaplaceSingleRoadthroughcentre}
			{{\cal L}_{{I_r}}}\left( {s|r} \right) = 
			\exp \left( { - 2{\mu _v}\int\limits_{ r}^\infty  {\left[ {1 - \frac{1}{{1 + s{P_v}{t^{ - \alpha }}}}} \right]dt} } \right).
			\end{equation}		
			The closed form solution of Eq.  (\ref{LaplaceSingleRoadthroughcentre}) for all values of $\alpha$ is proved in (\ref{LaplaceSingleRoadthroughcentrealphagenral}).
			
		\end{proof}

		\begin{cor}[Laplace Transform of Interference from All Roads Excluding Road Passing Through Origin]
			\label{cor8}
			The conditional Laplace transform of the total interference at the typical receiver, originating from vehicular transmitters located on all roads except the road passing through the origin is derived as 
		\end{cor}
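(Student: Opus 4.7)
The plan is to invoke the probability generating functional (PGFL) of the PLP to aggregate the per-road Laplace transforms already established in Corollaries \ref{cor1} and \ref{cor3}. Since vehicles on distinct roads form independent 1D PPPs conditioned on $\phi_R$, the joint Laplace transform factorizes over roads, and the remaining expectation over line positions is exactly a PGFL on the representation cylinder $\mathbb{C}$.

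Concretely, I would first condition on $\phi_R$ to write
\begin{equation*}
\mathcal{L}_{I_v}(s|r) = \mathbb{E}_{\phi_R}\Bigg[\prod_{L \in \phi_R \setminus L_0} \mathcal{L}_{I_L}(s|r,y_L)\Bigg],
\end{equation*}
where, for a line at perpendicular distance $y_L$, the per-road factor $\mathcal{L}_{I_L}(s|r,y_L)$ is supplied by Corollary \ref{cor3} when $y_L<r$ (the road cuts the inner disk) and by Corollary \ref{cor1} when $y_L>r$. Then, applying the PGFL on $\mathbb{C}=[0,\pi]\times[-R,R]$ with intensity $\lambda_R$ converts the expected product into an exponential of an integral over $\mathbb{C}$ of the form $1-\mathcal{L}_{I_L}(s|r,y)$.

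To reach the stated closed form, I would next exploit the motion-invariance assumption: the integrand is independent of $\theta$, so the $\theta$-integral contributes a factor of $\pi$; the $y$-integrand is symmetric about zero, so its range can be doubled and restricted to $y\ge 0$, producing the overall prefactor $2\pi\lambda_R$. Splitting the $y$-integral at the threshold $y=r$ and substituting the two single-road Laplace transforms leaves
\begin{equation*}
\mathcal{L}_{I_v}(s|r)=\exp\Bigg[-2\pi\lambda_R\!\int_0^r\!\!\bigl(1-\mathcal{L}_{I_{R_{in}}}(s|r,y)\bigr)dy -2\pi\lambda_R\!\int_r^\infty\!\!\bigl(1-\mathcal{L}_{I_{R_{out}}}(s|r,y)\bigr)dy\Bigg].
\end{equation*}

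The main obstacle is careful bookkeeping rather than any deep technical step. I need to justify that deleting the typical line $L_0$ from the product is harmless under the PGFL (it is a measure-zero event for the underlying PPP on $\mathbb{C}$), to confirm that the outer $y$-integral can be extended to infinity without divergence thanks to the $y^{-\alpha}$ tail of the per-road factor making $1-\mathcal{L}_{I_L}(s|r,y)$ integrable, and to keep the two layers of integration, namely the outer integral over the line parameter $y$ and the inner integral over the vehicle abscissa $t$ hidden inside each $\mathcal{L}_{I_L}$, syntactically separated in the final nested expression.
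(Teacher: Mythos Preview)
Your approach is correct and conceptually equivalent to the paper's, but packaged differently. The paper does not invoke the PGFL by name; instead it explicitly conditions on the \emph{number} of lines hitting $\mathcal{B}(0,r)$ (a Poisson count $j$) and separately on the number of lines in the annulus $\mathcal{B}(0,R)\setminus\mathcal{B}(0,r)$ (another Poisson count $k$), uses the fact that, conditional on the count, the perpendicular distances $y$ are i.i.d.\ uniform on the relevant interval, forms the product of the per-road Laplace transforms from Corollaries~\ref{cor1} and~\ref{cor3}, and then collapses the double sum over $j,k$ via the Taylor identity $e^x=\sum_n x^n/n!$. That sequence is precisely a first-principles derivation of the PGFL for the underlying PPP on $\mathbb{C}$, so your one-line invocation of the PGFL is just the compressed version of the same argument. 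What your route buys is brevity and a cleaner justification for why the $\theta$-integral contributes a factor $\pi$; what the paper's route buys is self-containment for a reader who does not recognize the PGFL. Two incidental differences: the paper retains the finite outer radius $R$ in the second $y$-integral rather than sending it to $\infty$ as you do, and the paper's printed prefactor in Eq.~(\ref{LaplaceAllRoads}) reads $2\mu_v\lambda_R$ rather than the $2\pi\lambda_R$ that your PGFL computation (and the paper's own Section~\ref{PLPDef}) would produce --- your coefficient is the correct one.
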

		\begin{align} \label{LaplaceAllRoads}
		{{\cal L}_{{I_v}}}\left( {s|r} \right) &= \left[ {\exp \left( {2{\mu _v}{\lambda _R}\int\limits_{ 0}^r {1 - {{\cal L}_{{I_{R_{in}}}}}\left( s \right)dy} } \right)} \right]\nonumber
		\\&\times \left[ {\exp \left( {2{\mu _v}{\lambda _R}\int\limits_r^\infty {1 - {{\cal L}_{{I_{R_{out}}}}}\left( s \right)dy} } \right)} \right],
		\end{align}
		where ${{\cal L}_{{I_{R_{in}}}}}\left( s \right)$ and ${{\cal L}_{{I_{R_{out}}}}}\left( s \right)$ are given in \eqref{LaplaceSingleRoad1} and \eqref{LaplaceSingleRoad2}, respectively.
		\begin{proof}
			See Appendix \ref{AppendixE}.
		\end{proof}

		With the help of Corollary \ref{cor4} and \ref{cor8}, we can derive the success probabilities for the V2V link, B2V link and that for the V2B link for a given distance $r$ in the following theorems.
		\begin{thm}[Success Probability of the V2V Link {Over Shared Channels}]
			\label{Lemma6}
			The  success probability of the V2V  link for given minimum distance $r_v$ between transmitter and receiver is derived as
			
			{	 		
				\begin{align} \label{PSv2vfinal}
				{P_{v2v}^{S}}(z|r_v) &= {\exp \left( { - \frac{{z{\sigma ^2}}}{{{P_v}  r_v^{ - \alpha_v }}}} \right)  {{\cal L}_{{I_v}}}\left( {\frac{z}{{{P_v}  r_v^{ - \alpha_v }}}|r_v} \right)}\nonumber
				\\&\times {{\cal L}_{{I_r}}}\left( {\frac{z}{{{P_v}  r_v^{ - \alpha_v }}}|r_v} \right),
				\end{align}
				where ${{\cal L}_{{I_{{v}}}}}\left( s|r_v \right)$  and ${{\cal L}_{{I_{r}}}}\left( s|r_v \right)$  are given in Eqs.  (\ref{LaplaceSingleRoadthroughcentrealphagenral}) and (\ref{LaplaceAllRoads}) by substituting $s={\frac{z}{{P_v  {r_v^{ - \alpha_v }}}}}$.} 
		\end{thm}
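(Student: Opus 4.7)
The plan is to start from the SINR-based definition of success probability in Eq. (\ref{PsGen}) and reduce it, via the Rayleigh fading assumption, to the product of Laplace transforms already derived for the two interference categories. Concretely, I would write
\begin{align*}
P_{v2v}^{S}(z|r_v) &= {\rm Pr}\left[ \frac{P_v r_v^{-\alpha} h}{I_v + I_r + \sigma^2} > z \,\Big|\, r_v \right] \\
&= {\rm Pr}\left[ h > \frac{z(I_v + I_r + \sigma^2)}{P_v r_v^{-\alpha}} \,\Big|\, r_v \right],
\end{align*}
and then invoke the fact that $h \sim \exp(1)$ so that the CCDF gives $\Pr[h > x] = e^{-x}$. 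Taking the expectation over the interference terms yields
\begin{equation*}
P_{v2v}^{S}(z|r_v) = \exp\!\left(-\frac{z\sigma^2}{P_v r_v^{-\alpha}}\right) \cdot \mathbb{E}_{I_v, I_r}\!\left[\exp\!\left(-\frac{z(I_v + I_r)}{P_v r_v^{-\alpha}}\right)\right].
\end{equation*}

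The second step is to factorize the joint expectation. Because $I_r$ collects contributions from the Cox process restricted to the typical line $L_0$ through the origin, while $I_v$ collects contributions from the Cox process restricted to all other lines of $\phi_R$, the two sums involve disjoint subsets of the underlying PLP-driven point process. By the construction in Section III-A (where $\phi_{t_0}$ is the superposition of $\phi_t$ with an independent 1D PPP on $L_0$), the two interference sums are independent, so the joint expectation splits as a product. Recognizing each factor as the Laplace transform of the corresponding interference evaluated at $s = z/(P_v r_v^{-\alpha})$ and substituting the closed-form expressions from Corollary \ref{cor4} and Corollary \ref{cor8} then yields Eq. (\ref{PSv2vfinal}).

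The only subtle step will be justifying the independence of $I_v$ and $I_r$, since both are driven by the same underlying PLP $\phi_R$. I would handle this by conditioning on the realization of the line process: given the lines, the 1D PPPs on distinct lines are mutually independent by construction, so conditionally $I_v \perp I_r$; but since $I_r$ depends only on $L_0$ (added deterministically under the Palm distribution via Slivnyak's theorem) and $I_v$ depends only on $\phi_R \setminus \{L_0\}$, the conditional independence lifts to unconditional independence. Everything else is a mechanical substitution of the earlier corollaries, so this independence argument is the one place where care is genuinely needed.
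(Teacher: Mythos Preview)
Your proposal is correct and follows the same approach as the paper, which simply writes out the SINR definition from Eq.~(\ref{PsGen}) and then states that ``with mathematical simplification'' the result follows. In fact your argument is more complete: you spell out the Rayleigh-fading CCDF step and the independence of $I_v$ and $I_r$ (via the Palm construction of $L_0$ and the independence of 1D PPPs on distinct lines), both of which the paper leaves implicit.
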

		
		\begin{proof}
			The success probability of the V2V link over shared channels for given minimum distance, $r_v$ between transmitter and receiver and operating mode $M=v2v$ is		
			\begin{equation} \label{PSv2v1}
			{P_{v2v}^{S}}(z|r_v) = {Pr\left[ {SINR > z} \right]}.
			\end{equation}
			Using Eq.  (\ref{PsGen}), the above equation can be written as
			{	
				\begin{equation} \label{PSv2v2}
				{P_{v2v}^{S}}(z|r_v) = {Pr\left[ {\frac{{P_v  {r_v^{ - \alpha_v }}  h}}{{{I_v} + {I_r} + {\sigma ^2}}} > z} \right]}. 
				\end{equation}}
			With mathematical simplification, the final equation for the success probability of the V2V link for given minimum distance, $r_v$ between transmitter and receiver is proved in Eq. (\ref{PSv2vfinal}).
		\end{proof}

		\begin{thm}[Uplink Success Probability for the V2B Link {in Network Mode}]
			\label{Lemma7}
			The uplink or V2B success probability over shared channels for given minimum distance, $r_b$ between transmitter and receiver is derived as
			{
				\begin{align} \label{PSv2bfinal}
				{P_{v2b}^{S}}(z|r_b) &= {\exp \left( { - \frac{{z{\sigma ^2}}}{{{P_v} \times r_b^{ - \alpha_b }}}} \right) \times {{\cal L}_{{I_v}}}\left( {\frac{z}{{{P_v} \times r_b^{ - \alpha_b }}}|{r_b}} \right)}\nonumber
				\\&\times {{\cal L}_{{I_r}}}\left( {\frac{z}{{{P_v} \times r_b^{ - \alpha_b }}}|{r_b}} \right).
				\end{align}
				where ${{\cal L}_{{I_{{v}}}}}\left( s|r_b \right)$  and ${{\cal L}_{{I_{r}}}}\left( s|r_b \right)$  are given in Eqs.  (\ref{LaplaceSingleRoadthroughcentrealphagenral}) and (\ref{LaplaceAllRoads}) by substituting $s={\frac{z}{{P_v  {r_b^{ - \alpha_b }}}}}$.} 
			\begin{proof}
				The  success probability for the V2B  link for given minimum distance, $r_b$ between transmitter and receiver is presented as 
				\begin{equation} \label{PSv2b1}
				{P_{v2b}^{S}}(z|r_b) = {P\left[ {SINR > z} \right]}.
				\end{equation}
				By using Eq. (\ref{PsGen}), the above equation can be written as
				{
					\begin{equation} \label{PSv2b2}
					{P_{v2b}^{S}}(z|r_b) = {Pr\left[ {\frac{{P_v  {r_b^{ - \alpha_b }}  h}}{{{I_v} + {I_r} + {\sigma ^2}}} > z} \right]}.
					\end{equation}}
				With mathematical simplification, the final equation for the success probability of the V2B  link for given minimum distance, $r_b$ between transmitter and receiver is proved in Eq. (\ref{PSv2bfinal}).	
			\end{proof}
		\end{thm}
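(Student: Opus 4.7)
The plan is to apply the standard Rayleigh-fading manipulation that turns an SINR threshold probability into a Laplace transform of the aggregate interference, exactly as was done for the V2V link in Theorem~\ref{Lemma6}. Beginning from Eq.~(\ref{PSv2b2}), I would isolate the channel gain $h$ on one side of the SINR inequality to obtain
\begin{equation}
P_{v2b}^{S}(z|r_b) = \Pr\!\left[h > \tfrac{z}{P_v\, r_b^{-\alpha}}\bigl(I_v + I_r + \sigma^2\bigr)\right].
\end{equation}
Since $h$ is unit-mean exponential and is independent of the interference and noise, the conditional tail probability equals $\exp(-u)$ evaluated at the argument above, and taking the expectation over the interference terms gives
\begin{equation}
P_{v2b}^{S}(z|r_b) = \mathbb{E}_{I_v,I_r}\!\left[\exp\!\left(-\tfrac{z(I_v + I_r + \sigma^2)}{P_v\, r_b^{-\alpha}}\right)\right].
\end{equation}

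Next I would pull the deterministic noise contribution $\exp\bigl(-z\sigma^2/(P_v\, r_b^{-\alpha})\bigr)$ out of the expectation. The remaining joint expectation factors as a product because $I_v$ and $I_r$ are independent: $I_r$ is driven by the independent 1D PPP on the typical line $L_0$ added through the origin via the Slivnyak construction of Section~\ref{sec:Sysmodel}, whereas $I_v$ is driven by the Cox process $\phi_t$ on all other roads, each line of which carries its own independent 1D PPP of transmitters. Setting $s = z/(P_v\, r_b^{-\alpha})$, the product becomes $\mathcal{L}_{I_v}(s|r_b)\,\mathcal{L}_{I_r}(s|r_b)$, and substituting the closed-form expressions from Corollary~\ref{cor4} and Corollary~\ref{cor8} with exclusion radius $r_b$ in place of the generic $r$ yields Eq.~(\ref{PSv2bfinal}).

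The argument is essentially mechanical once the interference independence is in place, so the only real modeling point is to justify that, for a typical V2B transmitter--receiver pair separated by $r_b$, the two interference populations remain independent conditional on the serving distance. This relies on the independent 1D PPPs placed on distinct lines of the PLP, together with the observation that the conditioning on $r_b$ only imposes exclusion regions that are already absorbed into the integration limits defining $\mathcal{L}_{I_v}$ (via Corollaries~\ref{cor1} and~\ref{cor3}) and $\mathcal{L}_{I_r}$. Once that is granted, the claim follows by direct substitution with no further calculation required.
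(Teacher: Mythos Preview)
Your proposal is correct and follows exactly the same route as the paper: start from the SINR definition in Eq.~(\ref{PsGen}), exploit the exponential tail of $h$ to turn the threshold probability into an expectation of an exponential, separate the deterministic noise factor, and factor the remaining expectation into the two conditional Laplace transforms from Corollaries~\ref{cor4} and~\ref{cor8} evaluated at $s=z/(P_v r_b^{-\alpha})$. In fact you spell out explicitly the Rayleigh-fading manipulation and the independence of $I_v$ and $I_r$ that the paper compresses into the phrase ``with mathematical simplification,'' so your version is a strict elaboration of the paper's argument rather than a departure from it.
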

		
		\begin{lem}[{Downlink Success Probability for B2V Link in Network Mode}]
			\label{Lemma10}
			{The downlink success probability for B2V link is given in \cite{andrews2016primer}, we still present results below for completeness} 
		\end{lem}
		
		\begin{align}\label{PSb2v}
		P_{b2v}^S(z) &= \int\limits_0^\infty  {\exp \left( { - \frac{{z{\sigma ^2}}}{{{P_b} \times r_{b2v}^{ - {{\alpha _b}}
			}}}} \right) \times {{\cal L}_{{I_B}}}\left( {\frac{z}{{{P_b} \times r_{b2v}^{ - {{\alpha _b}}
				}}}|{r_{b2v}}} \right)} \nonumber \\ 
		&\times {f_{{R_{B2V}}}}({r_{B2V}}) \times d{r_{B2V}}, 
		\end{align}

		{where laplace transform of interference ${{\cal L}_{{I_{{B}}}}}\left( s|r_{b2v} \right)$ is given as}  
		
		\begin{align}\label{b2vlaplace}
		{L_{{I_B}}}\left( {s|r} \right) = \exp \left( { - 2\pi {\lambda _b}\int\limits_r^\infty  {\left[ {1 - \frac{1}{{1 + s{P_b}{t^{ - {\alpha _b}}}}}} \right]t \times dt} } \right),
		\end{align}
		where $s = \frac{z}{{{P_b}{r^{ - {\alpha _b}}}}}$ and $r=r_{b2v}$. The pdf ${f_{{R_{B2V}}}}({{r_{B2V}}})$ is given as
		
		\begin{equation} \label{B2Vpdf}
		{f_{{R_{b2v}}}}(r_{b2v}) = 2\pi {\lambda _b}r_{b2v}{e^{ - \pi {\lambda _b}{r_{b2v}^2}}}. 
		\end{equation}
		
		\begin{cor}[Success Probability of C-V2X {with Mode Selection}]
			\label{cor5}
			{The success probability of cellular  V2X network over shared V2V and cellular UL channels having both direct and network modes is derived as}
			{
				\begin{align} \label{PSOverallfinal}
				{P_{V2X}^{S}}(z)  &= \int\limits_0^\infty  {\exp \left( { - \frac{{z{\sigma ^2}}}{{{P_v} \times r_v^{ - \alpha_v }}}} \right) \times {{\cal L}_{{I_v}}}\left( {\frac{z}{{{P_v} \times r_v^{ - \alpha_v }}}|r_v} \right)}\nonumber
				\\&\times {{\cal L}_{{I_r}}}\left( {\frac{z}{{{P_v} \times r_v^{ - \alpha_v }}}|r_v} \right) \times {P_{v2v}^{A}}(v2v|r_v)\nonumber \\&\times {f_{{R_v}}}({r_v})d{r_v} +P_{b2v}^S(z)\nonumber
				\\&\times\int\limits_0^\infty  {\exp \left( { - \frac{{z{\sigma ^2}}}{{{P_v} \times r_b^{ - \alpha_b }}}} \right) \times {{\cal L}_{{I_v}}}\left( {\frac{z}{{{P_v} \times r_b^{ - \alpha_b }}}|{r_b}} \right)}\nonumber
				\\&\times {{\cal L}_{{I_r}}}\left( {\frac{z}{{{P_v} \times r_b^{ - \alpha_b }}}|{r_b}} \right) \times {P_{v2b}^{A}}(v2b|r_b)\nonumber\\&\times {f_{{R_b}}}({r_b}) \times d{r_b},
				\end{align}
				{where ${{\cal L}_{{I_{{v}}}}}\left( s|r \right)$  and ${{\cal L}_{{I_{r}}}}\left( s|r \right)$  are given in Eqs.}  (\ref{LaplaceSingleRoadthroughcentrealphagenral}) and (\ref{LaplaceAllRoads}) by substituting $s={\frac{z}{{P_v  {r^{ - \alpha }}}}}$ and $r=r_v$ or $r=r_b$ and $\alpha=\alpha_b$ or $\alpha=\alpha_{v}$. The ${P_{v2v}^{A}}(v2v|r_v)$ is given in Eq. (\ref{V2Vpdfconditioned}), and ${P_{v2b}^{A}}(v2b|r_b)$ is given in Eq. (\ref{V2Basseq}).} The downlink success probability $P_{b2v}^S(z)$ for B2V link in network mode is given {in Eq.} (\ref{PSb2v}).
			
			\begin{proof}		
				By using total probability law, the success probability of C-V2X network is
				\begin{align} \label{PSoverall}
				{P_{V2X}^{S}}(z) &= {P_{v2v}^{S}}(z|r_v)\times{P_{v2v}^{A}}(v2v|r_v)\nonumber
				\\&+ {P_{v2b}^{S}}(z|r_b)\times{P_{v2b}^{A}}(v2b|r_b)\times P_{b2v}^S(z),
				\end{align}
				where ${P_{v2v}^{S}}(z|r_v)$ is given in Eq. (\ref{PSv2vfinal}), ${P_{b2v}^{S}}(z)$ is given in Eq. (\ref{PSb2v}) and ${P_{v2b}^{S}}(z|r_b)$ is given in Eq. (\ref{PSv2bfinal}), ${P_{v2v}^{A}}(v2v|r_v)$ is given in Eq. (\ref{V2Vpdfconditioned}) and ${P_{v2b}^{A}}(v2b|r_b)$ is given in Eq. (\ref{V2Basseq}). By removing the distance ($r_v$ and $r_b$) condition on Eq. (\ref{PSoverall}), the C-V2X overall success probability is proved in Eq. (\ref{PSOverallfinal}), which  completes the proof.		
			\end{proof}	
		\end{cor}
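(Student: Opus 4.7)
The plan is to invoke the law of total probability, partitioning the event ``the typical vehicle transmission succeeds'' according to which uplink mode is selected. Since modes $v2v$ and $v2b$ are mutually exclusive and exhaustive under the biased nearest-receiver link selection rule described in Section~\ref{sec:Sysmodel}, the overall success probability splits as
\begin{equation*}
P_{V2X}^{S}(z) = P_{v2v}^{S}(z|r_v)\,P_{v2v}^{A}(v2v|r_v) + P_{v2b}^{S}(z|r_b)\,P_{v2b}^{A}(v2b|r_b),
\end{equation*}
which is exactly Eq.~(\ref{PSoverall}). The first summand covers the event $B r_v^{-\alpha_v}\ge r_b^{-\alpha_b}$ (V2V branch) and the second its complement (V2B branch).

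Next, I would plug in the building blocks already proved earlier in the paper. The conditional reliability $P_{v2v}^{S}(z|r_v)$ is taken directly from Theorem~\ref{Lemma6} and $P_{v2b}^{S}(z|r_b)$ from Theorem~\ref{Lemma7}, both expressed through the Laplace transforms of the two interference contributions $I_v$ and $I_r$ derived in Corollary~\ref{cor4} and Corollary~\ref{cor8}, with $s$ set to $z/(P_v r_v^{-\alpha})$ or $z/(P_v r_b^{-\alpha})$ depending on the branch. The conditional association probabilities $P_{v2v}^{A}(v2v|r_v)$ and $P_{v2b}^{A}(v2b|r_b)$ are identified from the integrands of Lemma~\ref{Lemma2} and Lemma~\ref{Lemma3}, i.e., as the CCDF of the competing distance (V2B or V2V respectively) evaluated at the bias-weighted threshold induced by the serving distance.

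To finish, I would de-condition on the serving distance by integrating each branch against the appropriate density. For the V2V branch this is $f_{R_v}(r_v)$ from Corollary~\ref{cor7}, and for the V2B branch it is $f_{R_b}(r_b)=2\pi\lambda_b r_b e^{-\pi\lambda_b r_b^2}$ from Lemma~\ref{Lemma22}. Adding the two integrals reproduces the claimed expression~(\ref{PSOverallfinal}). The main obstacle is not analytical but rather one of careful bookkeeping: the serving distance $r$ appears simultaneously in (i) the noise term $\exp(-z\sigma^2/(P_v r^{-\alpha}))$, (ii) the excluded interference ball $\mathcal{B}(0,r)$ inside both Laplace transforms, and (iii) the threshold that defines the association event. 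One must ensure $r$ is consistently matched to $r_v$ in the V2V integrand and to $r_b$ in the V2B integrand, and that the conditional association probability, rather than the unconditional one, is paired with the corresponding distance density so that the joint law of the serving distance and the selected mode is not double-counted.
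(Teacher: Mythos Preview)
Your proposal is correct and follows essentially the same approach as the paper: invoke total probability over the two mutually exclusive modes to obtain Eq.~(\ref{PSoverall}), substitute the conditional success probabilities from Theorems~\ref{Lemma6}--\ref{Lemma7} and the conditional association probabilities from Eqs.~(\ref{V2Vpdfconditioned}) and~(\ref{V2Basseq}), then de-condition on the serving distance using $f_{R_v}$ and $f_{R_b}$. Your explicit bookkeeping remark about matching $r$ consistently across the noise term, the exclusion ball in the Laplace transforms, and the association threshold is in fact more careful than what the paper's own proof spells out.
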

		
		For the purpose of comparison, we derive the success probability of vehicular V2V communication without mode selection over shared V2V and cellular UL channel in the following:
		\begin{cor}[{Success Probability of the V2V Communication only Over Shared Channel}]
			\label{cor6}
			{The success probability of V2V Communication only, over shared V2V and UL channels is derived as}
			{		
				\begin{align} \label{PSv2vwithoutcellualr}
				{P_{v2v}^{only}}(z) &= \int\limits_0^\infty  {\exp \left( { - \frac{{z{\sigma ^2}}}{{{P_v} \times r_v^{ - \alpha_v }}}} \right) \times {{\cal L}_{{I_v}}}\left( {\frac{z}{{{P_v} \times r_v^{ - \alpha_v }}}|r_v} \right)}\nonumber
				\\&\times {{\cal L}_{{I_r}}}\left( {\frac{z}{{{P_v} \times r_v^{ - \alpha_v }}}|r_v} \right) \times {f_{{R_v}}}({r_v})\times d{r_v},
				\end{align}
				where 	the PDF of $r_v$ is given in Eq. (\ref{pdfv2vfinal2}), and ${{\cal L}_{{I_{{v}}}}}\left( s|r_v \right)$  and ${{\cal L}_{{I_{r}}}}\left( s|r_v \right)$  are given in Eqs.  (\ref{LaplaceSingleRoadthroughcentrealphagenral}) and (\ref{LaplaceAllRoads}) by substituting $s={\frac{z}{{P_v  {r_v^{ - \alpha_v }}}}}$.}
			\begin{proof}
				The success probability of V2V communication over shared channel can be derived by removing condition on $r_v$ in Eq. (\ref{PSv2vfinal}) and the final expression is proved in Eq. (\ref{PSv2vwithoutcellualr}). 	
			\end{proof}
			
		\end{cor}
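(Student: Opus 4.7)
The plan is to derive $P_{v2v}^{only}(z)$ as the deconditioned version of the V2V-link success probability already obtained in Theorem~\ref{Lemma6}, Eq.~(\ref{PSv2vfinal}). In the absence of cellular infrastructure, every vehicle can only communicate on the V2V link, so the link-selection rule is trivial and no association probability factor appears; what remains is to average the conditional success probability over the random shortest V2V distance $R_v$ whose density is given in Corollary~\ref{cor7}, Eq.~(\ref{pdfv2vfinal2}).

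First I would reuse the chain of arguments leading to Theorem~\ref{Lemma6}: starting from the outage event $\Pr[P_v h r_v^{-\alpha}/(I_v+I_r+\sigma^2) > z \mid R_v = r_v]$, I would exploit the fact that $h\sim\mathrm{Exp}(1)$ to pull the exponential-noise term out and recognize the expectation over $I_v+I_r$ as the product of the two Laplace transforms evaluated at $s=z/(P_v r_v^{-\alpha})$, namely $\mathcal{L}_{I_v}$ and $\mathcal{L}_{I_r}$ supplied by Corollaries~\ref{cor8} and \ref{cor4} (Eqs.~(\ref{LaplaceAllRoads}) and (\ref{LaplaceSingleRoadthroughcentrealphagenral})). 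This reproduces Eq.~(\ref{PSv2vfinal}) verbatim for the conditional probability. Next I would decondition on $R_v$ by integrating against $f_{R_v}(r_v)$ over $r_v\in(0,\infty)$, i.e.
\begin{equation*}
P_{v2v}^{only}(z) = \int_0^\infty P_{v2v}^{S}(z \mid r_v)\, f_{R_v}(r_v)\, dr_v,
\end{equation*}
which upon substitution of the explicit Laplace-transform expressions yields exactly Eq.~(\ref{PSv2vwithoutcellualr}).

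The main subtlety, and really the only point worth checking beyond mechanical substitution, is to justify that the interference field seen by the typical V2V receiver in this cellular-free network is statistically identical to the one characterized for the V2X network. This holds because the set of transmitting vehicles is governed by the same Cox process $\phi_t$ driven by the PLP $\phi_R$ in both settings, and because the V2X link-selection rule merely reroutes a transmitter's destination without removing it from the interferer pool; hence the per-road Laplace transforms from Corollaries~\ref{cor1}--\ref{cor8} carry over unchanged. A second minor point is that the typical-receiver conditioning here uses the Palm distribution of $\phi_r$ on the typical line $L_0$, so invoking Slivnyak's theorem as already done in Section~\ref{sec:Sysmodel} ensures that the distribution of interferers relative to the receiver is the Cox process augmented by an independent 1D~PPP on $L_0$, exactly the setting under which $\mathcal{L}_{I_r}(s\mid r_v)$ was derived. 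With these two observations in place the corollary follows immediately from the deconditioning step above.
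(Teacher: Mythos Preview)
Your proposal is correct and follows essentially the same approach as the paper: take the conditional V2V success probability from Theorem~\ref{Lemma6}, Eq.~(\ref{PSv2vfinal}), and decondition on $R_v$ by integrating against $f_{R_v}(r_v)$ from Corollary~\ref{cor7}. The paper's own proof is a single sentence stating exactly this; your additional remarks on why the interference Laplace transforms carry over unchanged and on the Palm/Slivnyak justification are sound elaborations but go beyond what the paper provides.
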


		\section{Numerical Results}

		In this section, the association probabilities of mode selection scheme (direct or network mode) are plotted using (\ref{V2VProbincomplete}) and (\ref{V2BAssProb}). {The success probabilities of C-V2X over shared channels having both direct and network modes are plotted using} (\ref{PSOverallfinal}) in comparison with success probability of the V2V communication alone over shared V2V and cellular UL channels. We also plot the success probability of direct mode, and network mode along with its association probability using the first part of Eq. (\ref{PSOverallfinal}), and the second part of Eq. (\ref{PSOverallfinal}), respectively. The analytical results are validated by Monte Carlo simulations as shown in each figure. In all the figures, we set the path loss at $\alpha_{v}=\alpha_b=4$ and the thermal noise spectral density, $\sigma^2=$ -174 dBm/Hz for 10 MHz bandwidth. The transmit power of vehicles are set to be {23 dBm as defined by \cite{3GPPRelease14}}. For comparison purposes, V2V communication without mode selection scheme has also been plotted using (\ref{PSv2vwithoutcellualr}) to exhibit advantages of C-V2X communication having both direct and network modes, where modes are selected based on mode selection scheme. {In the figures, ``analyt.” represents analytical plot, ``sim" represents simulation plot, ``V2V" represents only V2V communication without mode selection scheme, ``Network mode" and ``Direct mode" represents C-V2X communication in network and direct modes, respectively. The ``C-V2X" represents cellular V2X communication having both direct and network modes.} {Note, in our model, V2V and direct mode communication shares frequencies with uplink cellular bands which is different from PC5 based dedicated sidelink proposed by 3GPP Release 14.} {It is pertinent to highlight that a small gap between simulation and analytic results of figures (Fig.} \ref{PSProb} to Fig. \ref{Bias}) {may exist due to ignored DL and UL correlation effects as a result of location of base-stations.} 
		

		
		%
		
		\subsection{Impact of the SINR threshold}
		
		In this subsection, we examine the effect of SINR threshold, $z$ on the success probability of the proposed model. In Fig. \ref{PSProb}, we set {$\lambda_R$ = 1 km/km$^2$, $\mu_v$ = 10 nodes/km, $\lambda_b$ = 20 BSs/km$^2$ and $B = 1$.} 
		Fig. \ref{PSProb} plots the  success probability of the C-V2X communication with mode selection and V2V communication without mode selection at the arbitrary receiver versus the SINR threshold. Following insights are observed: 1) the cellular V2X communication with mode selection performs equivalent to V2V communication without mode selection. {Both direct and network modes can contribute in achieving the better reliability and coverage for C-V2X network. The network mode can facilitate communication over longer ranges without the requirement of RSUs. 2) we see that contribution of network mode is significant in C-V2X network over V2V network  when the vehicle intensities are low or medium. Thus, network mode can provide longer coverage for vehicular communication which in not possible through V2V communication alone.  3) we observe that in a highly dense C-V2X network, most of the vehicles connect via direct mode instead of network mode.}
		
		%
		\begin{figure}
			\begin{center}
				\includegraphics[scale=0.42]{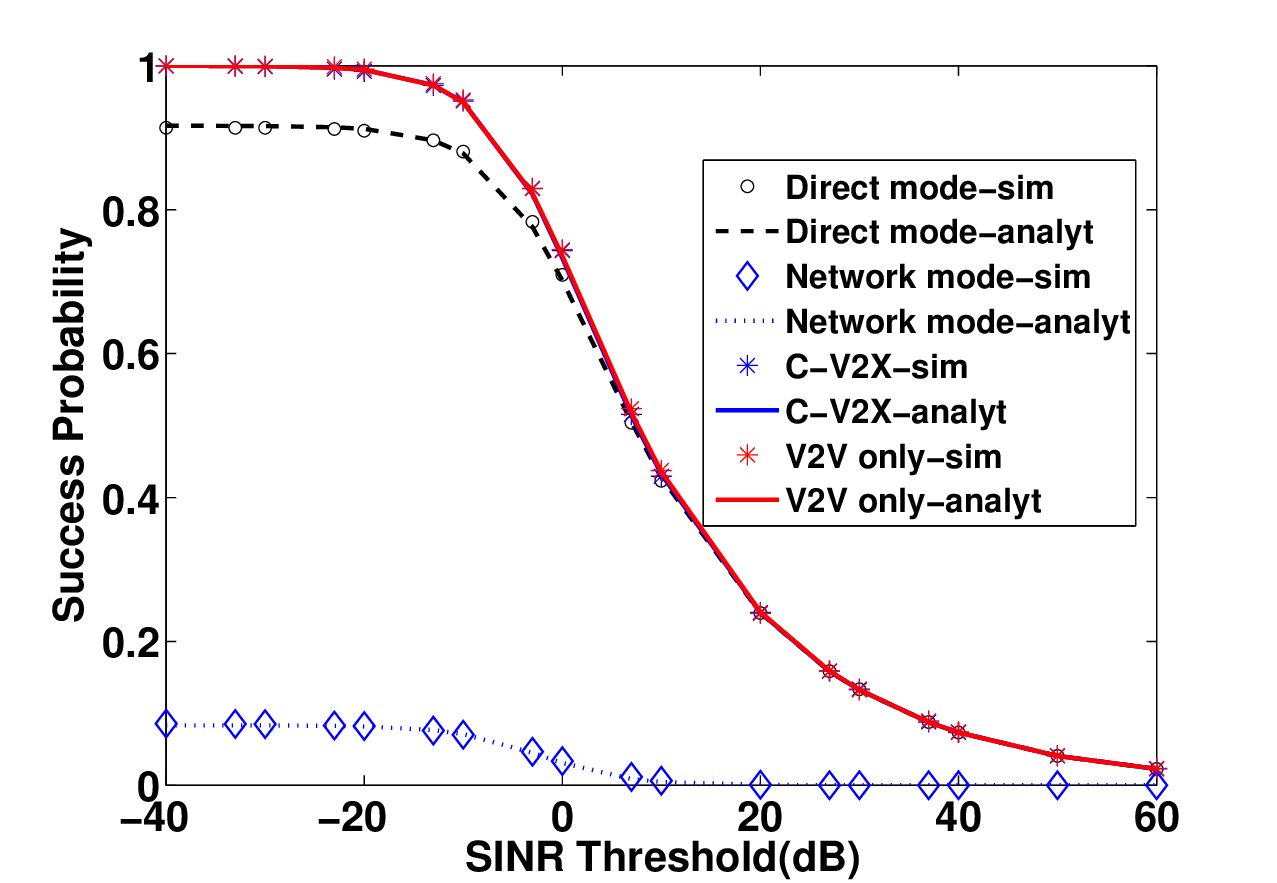}
				\caption{{The success probability versus the SINR threshold.}}
				\label{PSProb}
			\end{center}
		\end{figure}
		
		


		\subsection{Impact of the vehicle intensity}
		
		In this subsection, we examine the effect of vehicle intensity, $\mu_v$ at  success probability and association probabilities of C-V2X network. In Fig. \ref{VehVarOverall}, \ref{AssProb}, {we set $\lambda_R$ = 5 km/km$^2$, $\lambda_b$ = 20 BSs/km$^2$, $z = 0$ dB and $B = 1$.} 
		
		Fig. \ref{VehVarOverall} plots the success probability at the arbitrary receiver versus the intensity of vehicular nodes. The following insights can be observed: 1) The success probability of C-V2X communication slightly increases as the intensity of vehicles on the roads increases due to decrease in network mode success probability. 2) In low and medium intensity networks, there is almost equal probability of connecting through direct or network modes. However, in highly dense networks, this advantage of cellular V2X communication over V2V communication alone reduces to certain extent. 3) We see that the direct mode success probability increases with the increase of vehicle nodes. This is because of reduction in distance between the arbitrary receiver and vehicle located on the lines that are closer to the origin. However, the distance between arbitrary receiver and vehicles located on the lines that are farther away from the origin does not decrease at same rate due to effect of perpendicular distance of roads. This increases the desired signal power at a faster rate than the interference power, thus improving the SINR and hence the success probability at the arbitrary receiver.

		%
		\begin{figure}
			\begin{center}
				\includegraphics[scale=0.42]{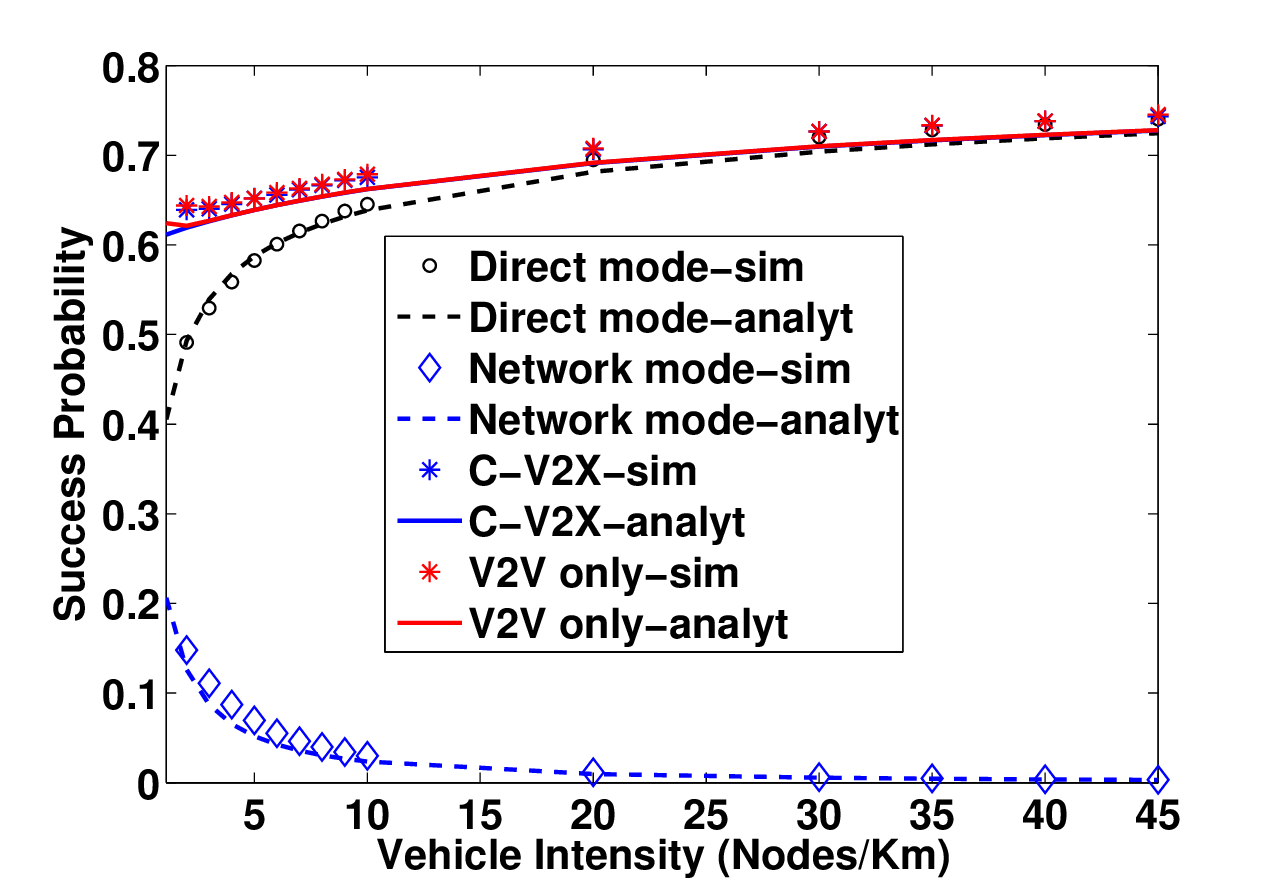}
				\caption{{The success probability versus the vehicle intensity.}}
				\label{VehVarOverall}
			\end{center}
		\end{figure}
		
		In Fig. \ref{AssProb}, we plot the association probabilities of direct and network modes versus the intensity of vehicles. From the figure, we observed that with the increase of vehicular node intensities, more vehicles start to use the direct mode as the distance between the vehicles reduces in highly dense network. {However, in a low and medium intensity C-V2X networks, both direct and network modes are used by the vehicles.}
		
		\begin{figure}
			\begin{center}
				\includegraphics[scale=0.42]{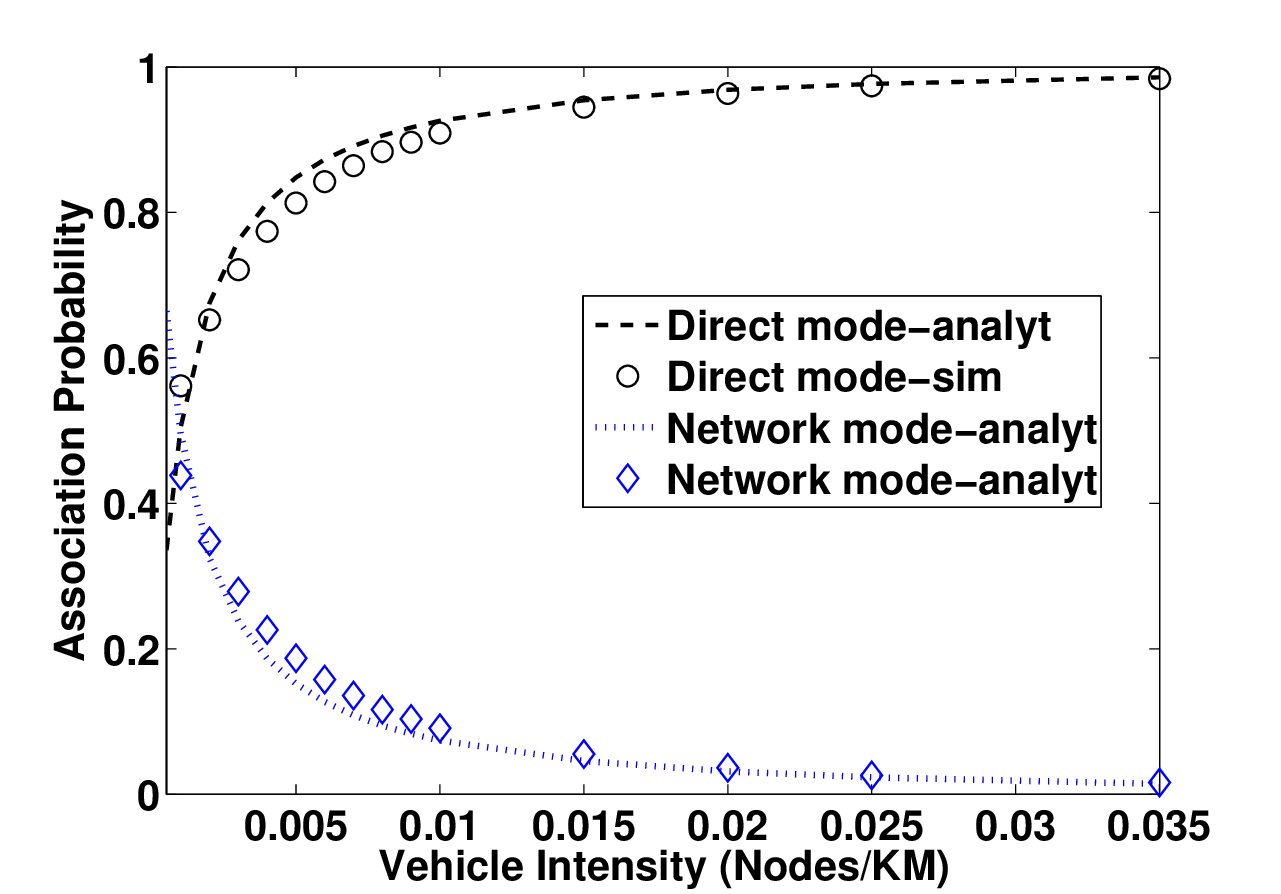}
				\caption{{The {association probability} versus the vehicle intensity.}}
				\label{AssProb}
			\end{center}
		\end{figure}
		

		\subsection{Impact of the road intensity}
		In this subsection, we examine the effect of road intensity, $\Lambda_R$ at success probability of C-V2X network having mode selection between direct and network modes and V2V communication without mode selection. In Fig. \ref{RoadVarOverall}, we set {$\mu_v$ = 5 nodes/km, $\lambda_b$ = 20 BSs/km$^2$, $B = 1$ and $z=0$ dB.} 
		
		Fig. \ref{RoadVarOverall} plots the success probability at the arbitrary receiver versus the intensity of roads. The following insights can be observed: 1) {The success probability of C-V2X communication with mode selection is equal to V2V communication only.} 2) From detailed analysis of results, we see that with the increase of road intensity, the V2V serving distance decreases by bring nodes closer due to which success probability increases and vehicles start using direct mode instead of the network mode.

		
		\begin{figure}
			\begin{center}
				\includegraphics[scale=0.42]{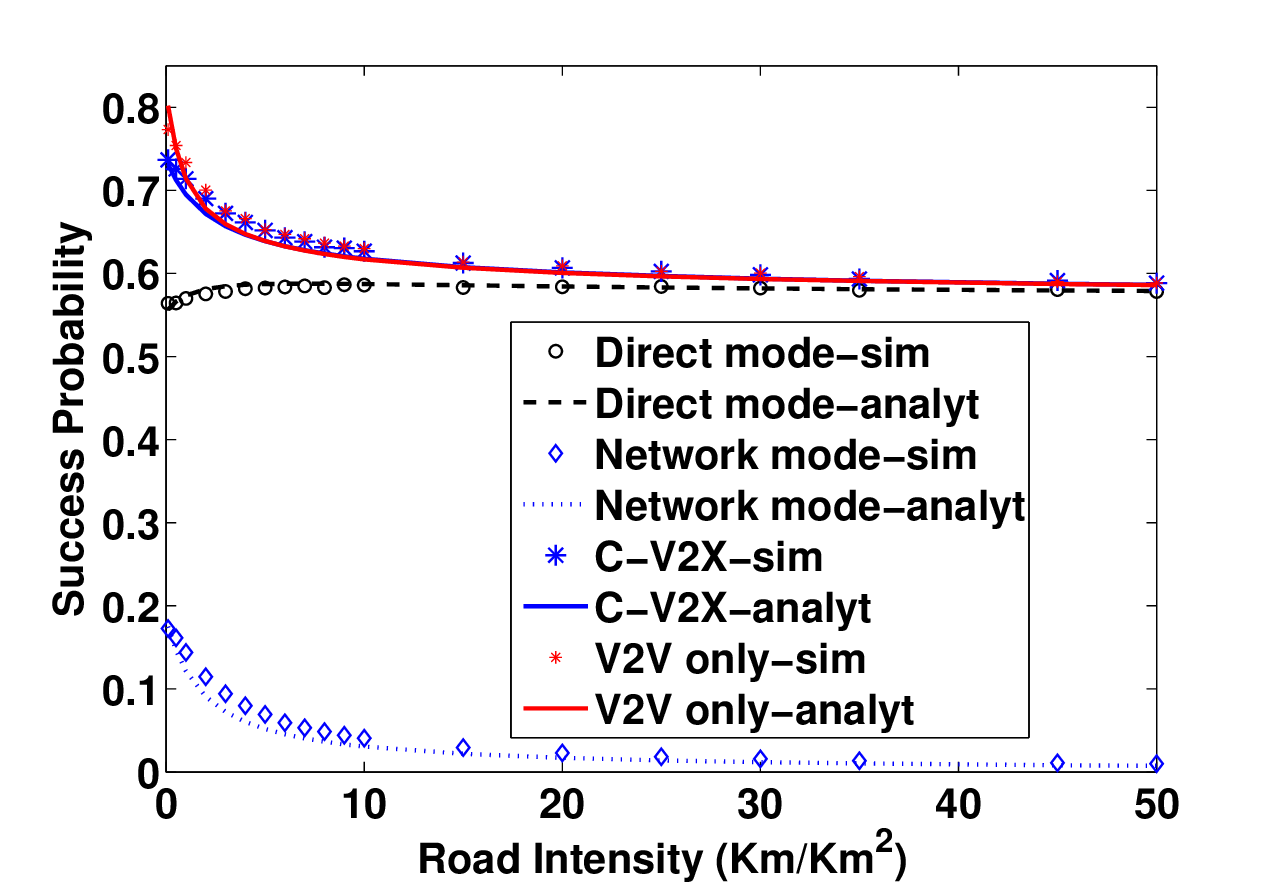}
				\caption{{The success probability versus the road intensity.}}
				\label{RoadVarOverall}
			\end{center}
		\end{figure}
		

		\subsection{Impact of the base-station intensity}
		In this subsection, we examine the effect of BS intensity, $\Lambda_b$ at the success probability of the C-V2X communication and solely V2V communication. In Fig. \ref{BSVarOverall}, we set {$\mu_v$ = 5 nodes/km, $\lambda_R$ = 5 km/km$^2$, $B = 1$ and $z=0$ dB.} 
		
		Fig. \ref{BSVarOverall} plots the  success probability at the arbitrary receiver versus the intensity of BSs. The following insights can be observed: 1) the success probability of cellular V2X network remains constant as success probability of network mode increases at same rate at which direct mode success probability decreases. 2) we observed that with the BS densification, the probability of having base-station in near vicinity to vehicular transmitter is higher than the PLP based V2V link distance, because the BSs are uniformly distributed instead of non-uniform PLP distribution of roads. Therefore, probability of connecting with network mode increases with increase of BS density. 3) the success probability of the V2V communication remains constant as variation of BS intensities has no impact on V2V communication. 4) it is expected that in current heterogeneous networks where the intensity of BS is mostly high and sufficient, the number of cellular BSs are available to provide highly reliable coverage to vehicular networks. From this analysis, we can conclude that C-V2X communication with both direct and network modes is going to provide better reliability and coverage performance than V2V communication alone.

		%
		
		\begin{figure}
			\begin{center}
				\includegraphics[scale=0.42]{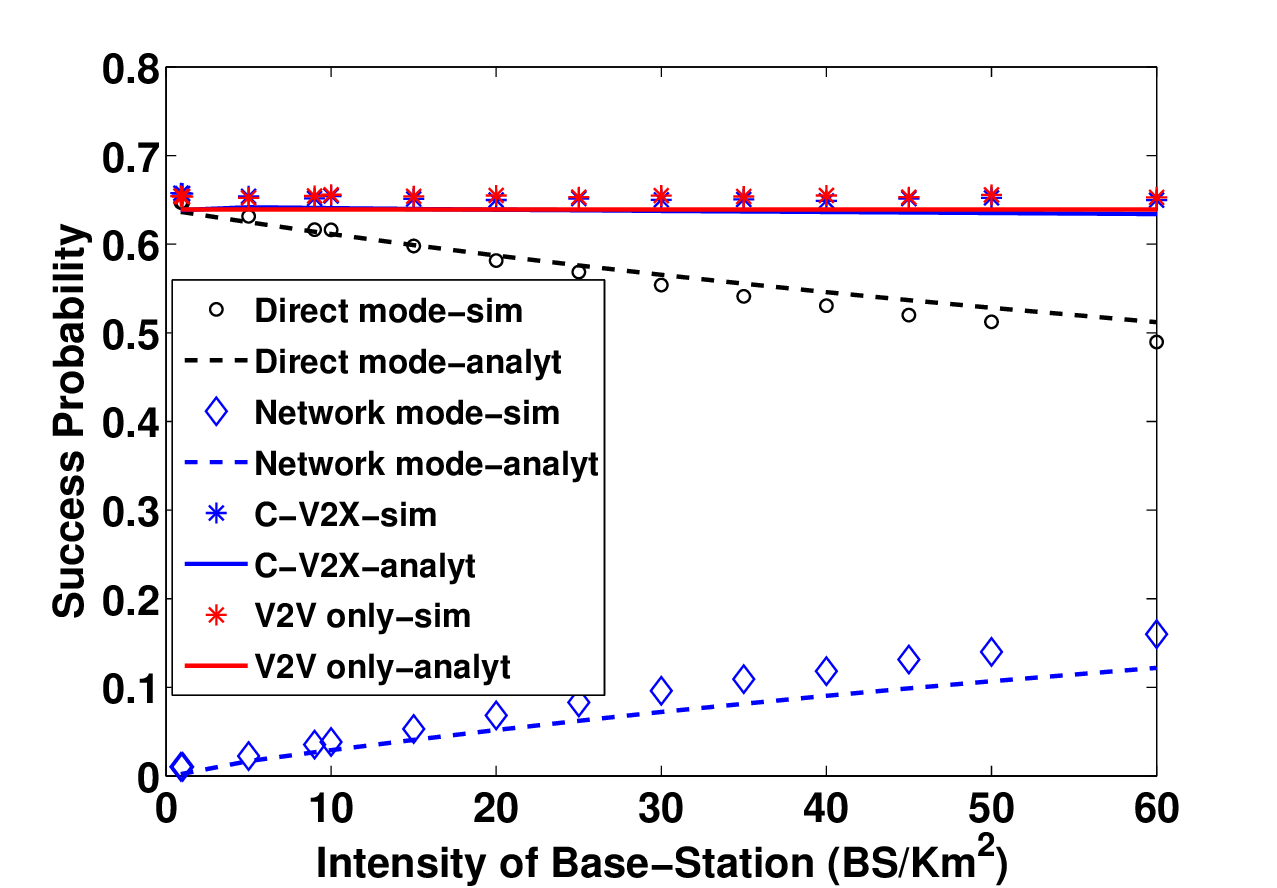}
				\caption{{The success probability versus the BSs intensity.}}
				\label{BSVarOverall}
			\end{center}
		\end{figure}

		
		\subsection{Impact of the association bias}
		In this subsection, we examine the effect of association bias, $B$ at success probability of C-V2X network and V2V communication. In Fig. \ref{Bias}, we set {$\mu_v$ = 5 nodes/km, $\lambda_R$ = 5 km/km$^2$, $\lambda_b = 20$ $BSs/km^2$ and $z=0$ dB.} 
		
		Fig. \ref{Bias} plots the success probability at the arbitrary receiver versus the association bias. The following insights can be observed: 1) the success probability of C-V2X communication is slightly higher than V2V communication alone as both direct and network modes supplement each other for success probability of C-V2X communication and there are no coverage gaps for the network. 2) we see that at lower bias values, there is an equal opportunity for both modes being selected. At $B=1$, there is equal probability of association with direct and network modes. 3) {the success probability of V2V communication without mode selection scheme remains constant as association bias has no effect on this type of communication.} 4) we can summarize that traffic loading and interference to cellular networks can be controlled through association bias. 
		
		%
		\begin{figure}
			\begin{center}
				\includegraphics[scale=0.42]{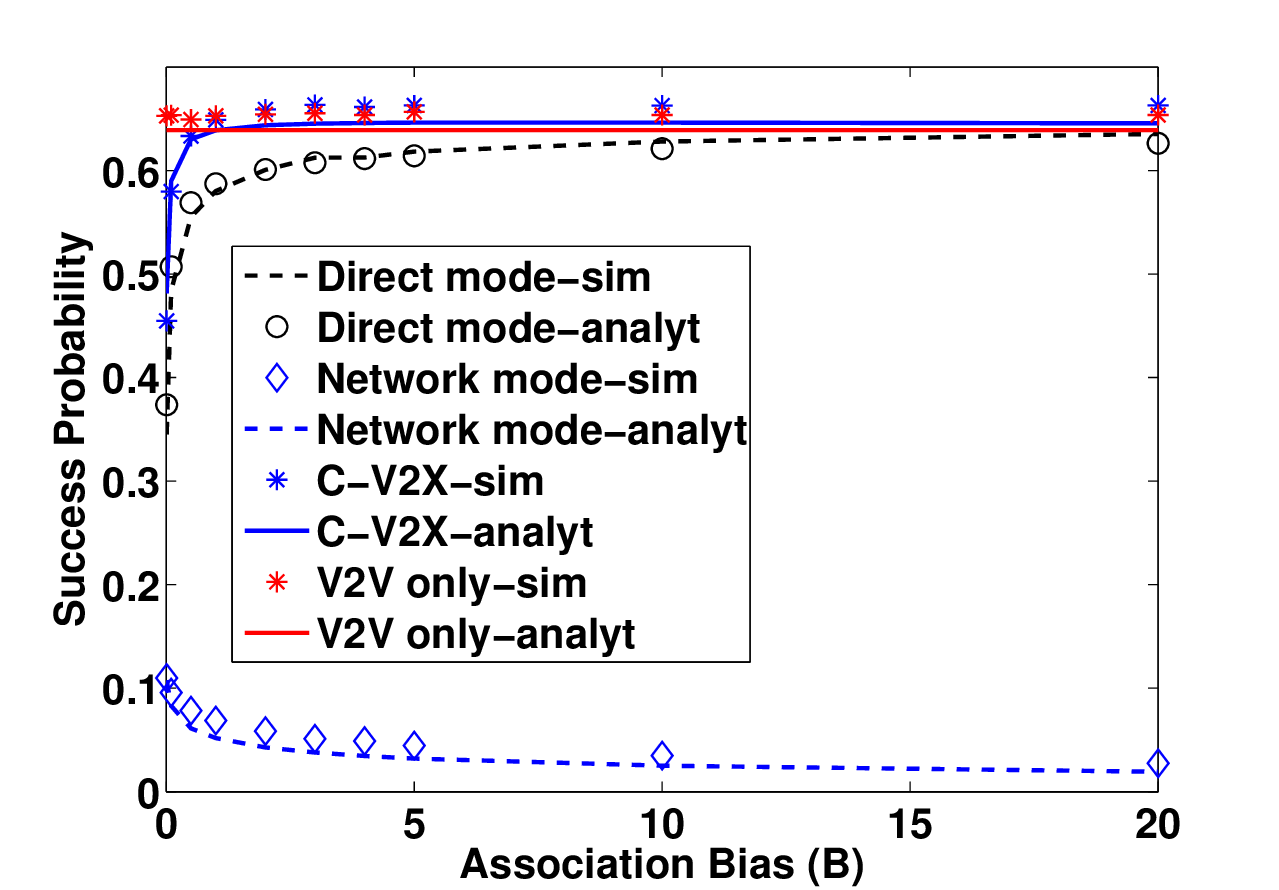}
				\caption{{The success probability versus the association bias.}}
				\label{Bias}
			\end{center}
		\end{figure}

		\section{Conclusion}
		{In this paper, we presented a comprehensive and tractable analytical framework for the reliability performance of cellular vehicle-to-everything (C-V2X)  communication in which vehicular communication can be established through cellular network or directly between vehicles on shared V2V and cellular uplinks channels. A flexible mode selection scheme has been proposed for the vehicular transmitter to decide between the direct and network modes, with a bias factor controlling the amount of vehicular interference and traffic on the cellular network. By modeling the vehicles on roads as doubly stochastic Cox process, and the BSs as 2D PPP, we derived the expressions for the success probabilities of the direct mode, the network mode (both uplink and downlink), as well as the C-V2X link having both modes, which are validated by the simulations. By comparing the proposed mode based C-V2X communication with the solely V2V communication without mode selection, we have shown the reliability enhancement brought by the shared communication via cellular networks along with spectral efficiency by sharing V2V link and cellular uplink frequencies. Future works can be extended to 1) interference mitigation techniques for cellular V2X network 2) the derivation of V2V communication analytical model for Ricean fading channel and results comparison with Rayleigh fading channels 3) modeling of vehicles on PLP as hard-core process on the line instead of PPP 4) implementation of the proposed model for vehicular communication and validation of reliability results of C-V2X communication in comparison with V2V communication by the industry 5) correctness verification of using PLP for stochastic modeling the roads by using the Google maps and real roads.}

		\appendices

		\section{Proof of Lemma \ref{Lemma2}}
		\label{AppendixB}
		The conditional association probability of the V2V communication for shared link is given by
		\begin{equation}
		{P_{v2v}^{A}}(v2v|{R_v}) = P\left[ {B \times r_v^{ - {\alpha _v}} \ge r_b^{ - {\alpha _b}}} \right],\label{ProbV2Vshared}
		\end{equation}
		where $B$ is the association bias, $r_v$ is the V2V link distance and $r_b$ is the V2B link distance. By simplifying the above equation, we get	
		\begin{equation}
		{P_{v2v}^{A}}(v2v|{R_v}) = P\left[ {{r_b} \ge \frac{{r_v^{\frac{{{\alpha _v}}}{{{\alpha _b}}}}}}{{{B^{\frac{1}{{{\alpha _b}}}}}}}} \right].\label{ProbV2Vshared2}
		\end{equation}
		
		It means that the vehicular transmitter which is located at distance $r_v$ is connected to vehicular receiver. Therefore, the disk B(0, ${\frac{{r_v^{\frac{{{\alpha _v}}}{{{\alpha _b}}}}}}{{{B^{\frac{1}{{{\alpha _b}}}}}}}}$) does not contain any BS. By inserting CCDF of V2B link in (\ref{ProbV2Vshared2}), we get
		\begin{equation} \label{V2Vpdfconditioned}
		{P_{v2v}^{A}}(v2v|{R_v}) = \exp \left( { - \pi {\lambda _B}{{\left( {\frac{{r_v^{\frac{{{\alpha _v}}}{{{\alpha _b}}}}}}{{{B^{\frac{1}{{{\alpha _b}}}}}}}} \right)}^2}} \right).
		\end{equation}

		Now, removing condition on $R_v$, we get
		\begin{equation} \label{V2Vpdffinal}
		{P_{v2v}^{A}} = \int\limits_0^\infty  {\exp \left( { - \pi {\lambda _B}{{\left( {\frac{{r_v^{\frac{{{\alpha _v}}}{{{\alpha _b}}}}}}{{{B^{\frac{1}{{{\alpha _b}}}}}}}} \right)}^2}} \right) \times {f_{{R_v}}}({r_v})d{r_v}}.
		\end{equation}
		

		
		The PDF  of $R_v$ ${f_{{R_v}}(r_v)}$ is given in Eq. (\ref{pdfv2vfinalclosedform}). The solution for the V2V association probability is proved in Eq. (\ref{V2VProbincomplete}).

		\section{Proof of Lemma \ref{Lemma3}}
		\label{AppendixC}
		The conditional association probability of the V2B shared link is given as
		\begin{equation}\label{V2Basseq11}
		{P_{v2b}^{A}} (v2b|{R_b}) = P\left[ {B \times r_v^{ - {\alpha _v}} < r_b^{ - {\alpha _b}}} \right].
		\end{equation}
		
		Simplifying the above equation, we get
		\begin{equation}\label{V2Basseq111}
		{P_{v2b}^{A}}(v2b|{R_b}) = P\left[ {{r_v} > {B^{\frac{1}{{{\alpha _v}}}}}r_b^{\frac{{{\alpha _b}}}{{{\alpha _v}}}}} \right].
		\end{equation}
		
		It means that the disk b(0, ${{B^{\frac{1}{{{\alpha _v}}}}}r_b^{\frac{{{\alpha _b}}}{{{\alpha _v}}}}}$) does not contain any vehicle, and thus	
		\begin{equation} \label{V2Basseq1}
		{P_{v2b}^{A}}(v2b|{R_b}) = P\left[ {{\rm{\text{No vehicle in circle of radius},}}\,{B^{\frac{1}{{{\alpha _v}}}}}r_b^{\frac{{{\alpha _b}}}{{{\alpha _v}}}}} \right].
		\end{equation}
		
		%
		%

		By using the CCDF of $R_v$ given in Eq. (\ref{cdfv2v}), we get
		\begin{align} \label{V2Basseq}
		{P_{v2b}^{A}}(v2b|{R_b}) &= {\exp \Bigg[ { - 2\pi {\lambda _R}} \Bigg.}\nonumber
		\\&\Bigg. {\int\limits_{{y_n} = 0}^{{B^{\frac{1}{{{\alpha _v}}}}}r_b^{\frac{{{\alpha _b}}}{{{\alpha _v}}}}} {1 - {e^{ - 2{\mu _v}\sqrt {{{\left( {{B^{\frac{1}{{{\alpha _v}}}}}r_b^{\frac{{{\alpha _b}}}{{{\alpha _v}}}}} \right)}^2} - {y_n}^2} }}} d{y_n}} \Bigg]\nonumber
		\\&\times {\exp({ - 2{\mu _v}{B^{\frac{1}{{{\alpha _v}}}}}r_b^{\frac{{{\alpha _b}}}{{{\alpha _v}}}}}}).
		\end{align}
		
		
		Now, by removing condition on $R_b$ and by inserting CDF of $R_b$ in (\ref{V2Basseq}), we get Eq. (\ref{V2BAssProb}) for the unconditional V2B link association probability over shared channels.
		
		\section{Proof of Corollary \ref{cor1}}
		\label{AppendixD}
		
		The length of road $R_{out}$ lying inside the circular region $\mathcal{B}(0,R)$ is $2\sqrt {{R}^2 - {y}^2}$ and distance between two vehicles can be denoted as $t$. By using the properties of a PPP, the probability of there being $m$ points in this line segment can be calculated from 1D PPP. The conditional Laplace transform from these vehicles lying on this road segment to typical receiver can be calculated as follows 
		\begin{equation} \label{Laplacesingleroad1}
		{{\cal L}_{{I_{{R_{out}}}}}}\left( {s|r} \right) = {E_{{D_x}}}\left[ {\prod\limits_{x \in {R_{out}}} {\left[ {\frac{1}{{1 + s{P_v}D_{^x}^{ - \alpha }}}} \right]} } \right].
		\end{equation}
		
		Now conditioning over number of vehicles lying on the road and then deconditioning on interference caused by each node, we get
		\begin{align} \label{Laplacesingleroad2}
		{{\cal L}_{{I_{{R_{out}}}}}}\left( {s|r} \right) &= \sum\limits_{m \ge 0} {P\left[ {{N_v} = m} \right]}\nonumber
		\\&\times {\left[ {\int\limits_{  - \sqrt {{R^2} - {y^2}} }^{\sqrt {{R^2} - {y^2}} } {\frac{{f(t)dt}}{{1 + s{P_v}{{\left( {{y^2} + {t^2}} \right)}^{\frac{{ - \alpha }}{2}}}}}} } \right]^m}.
		\end{align}
		
		The number of points on the line segment of length ${2\sqrt {{R^2} - {y^2}} }$ is a Poisson random variable with mean ${2{\mu _v}\sqrt {{R^2} - {y^2}} }$ and $t$ is uniformly distributed between $(-\sqrt {{R}^2 - {y}^2}, \sqrt {{R}^2 - {y}^2})$ and has a PDF $f(t) = \frac{1}{{2\sqrt {{R^2} - {y^2}} }}$. By inserting pdf of $f(t)$ in the above equation, we get
		\begin{align} \label{Laplacesingleroad3}
		{{\cal L}_{{I_{{R_{out}}}}}}\left( {s|r} \right) &= \sum\limits_{m \ge 0} {\frac{{{e^{ - 2{\mu _v}\sqrt {{R^2} - {y^2}} }}{{\left( {2{\mu _v}\sqrt {{R^2} - {y^2}} } \right)}^m}}}{{m!{{\left( {2\sqrt {{R^2} - {y^2}} } \right)}^m}}}}\nonumber
		\\&\times {\left[ {\int\limits_{ - \sqrt {{R^2} - {y^2}} }^{\sqrt {{R^2} - {y^2}} } {\left( {\frac{{dt}}{{1 + s{P_v}{{\left( {{y^2} + {t^2}} \right)}^{\frac{{ - \alpha }}{2}}}}}} \right)} } \right]^m}.
		\end{align}
		
		By simplifying above equation by using the property of integral of even function and by using the Taylor series expansion, ${e^x} = \sum\limits_{n = 0}^\infty{\frac{{{x^n}}}{{n!}}}$, we get	
		\begin{align} \label{Laplacesingleroad4}
		{{\cal L}_{{I_{{R_{out}}}}}}\left( {s|r} \right) &= \exp \left( { - 2{\mu _v}\sqrt {{R^2} - {y^2}} } \right)\nonumber
		\\&\times \exp \left( {2{\mu _v}\int\limits_{ 0}^{\sqrt {{R^2} - {y^2}} } {\left( {\left[ {\frac{{dt}}{{1 + s{P_v}{{\left( {{y^2} + {t^2}} \right)}^{\frac{{ - \alpha }}{2}}}}}} \right]} \right)} } \right).
		\end{align} 
		
		The final expression for Laplace transform of interference from a road located outside the inner circular region $b(0,r)$ is given in equation (\ref{LaplaceSingleRoad2}).

		\section{Proof of Corollary \ref{cor8}}
		\label{AppendixE}
		We condition on number of roads, $j$ crossing the region $\mathcal{B}(0,r)$ to calculate the interference from vehicles which lie on roads which interest the region $\mathcal{B}(0,r)$. However, on these lines, the interfering vehicles are located outside the region $\mathcal{B}(0,r)$. This is a Poisson random variable with mean ${2{\mu _v}r{\lambda _R}}$. Similarly, we condition on number of roads, $k$ which lie between circular regions $\mathcal{B}(0,r)$ and $\mathcal{B}(0,R)$. This is also a Poisson random variable with mean ${2{\mu _v}(R-r){\lambda _R}}$. Therefore, the Laplace Transform of interference of the total interference originating from Poisson Line Process can be written as	
		\begin{align} \label{Laplaceallroad1}
		{\mathcal{L}_{{I_v}}}\left( {s|r} \right) &= \left[ {\sum\limits_{j \geqslant 0} {\frac{{\exp \left( { - 2{\mu _v}d{\lambda _R}} \right) \times {{\left( {2{\mu _v}r{\lambda _R}} \right)}^j}}}
				{{j!}}} } \right.\nonumber
		\\&\left. { \times \left[ {{{\left( {\int\limits_{ - r}^r {{\mathcal{L}_{{I_{R1}}}}\left( s \right) \times {f_Y}(y)dy} } \right)}^j}} \right]} \right]\nonumber
		\\&\times \left[ {\sum\limits_{k \geqslant 0} {\frac{{{e^{ - 2{\mu _v}\left( {R - r} \right){\lambda _R}}} \times {{\left( {2{\mu _v}\left( {R - r} \right){\lambda _R}} \right)}^k}}}
				{{k!}}} } \right.\nonumber
		\\&\left. { \times \left[ {{{\left( {\int\limits_{y=r}^R {{\mathcal{L}_{{I_{R2}}}}\left( s \right) \times {f_Y}(y)dy} } \right)}^k}} \right]} \right].
		\end{align}   
		
		%
		
		By writing the above equation in the form of Taylor series,
		we have	
		\begin{align} \label{Laplaceallroad3}
		{{\cal L}_{{I_v}}}\left( {s|r} \right) &= \left[ {{e^{ - 2{\mu _v}r{\lambda _R}}}\sum\limits_{j \ge 0} {\frac{{{{\left( {{\mu _v}{\lambda _R}\int\limits_{y =  - r}^r {{{\cal L}_{{I_{R1}}}}\left( s \right)dy} } \right)}^j}}}{{j!}}} } \right] \nonumber\\ 
		&\times \left[ {{e^{ - 2{\mu _v}\left( {R - r} \right){\lambda _R}}}\sum\limits_{k \ge 0} {\frac{{{{\left( {{\mu _v}{\lambda _R}\int\limits_{y = r}^R {{{\cal L}_{{I_{R2}}}}\left( s \right)dy} } \right)}^k}}}{{k!}}} } \right].
		\end{align} 
		
		By simplifying the above equation, we get the final result for Laplace Transform of interference of vehicles located on all roads excluding the road passing through the origin and it is proved in Eq. (\ref{LaplaceAllRoads}).

		%
		%

		\ifCLASSOPTIONcaptionsoff
		\newpage
		\fi

		\bibliographystyle{IEEEtran}
		\bibliography{Bibliography}
		
		%
		%

		

	\end{document}